\DeclareMathOperator*{\argmax}{arg\,max}
    \newcolumntype{L}{>{\raggedright\arraybackslash}X}
\definecolor{mygreen}{RGB}{28,172,0} % color values Red, Green, Blue
\definecolor{mylilas}{RGB}{170,55,241}
\DeclareFixedFont{\ttb}{T1}{txtt}{bx}{n}{12} % for bold
\DeclareFixedFont{\ttm}{T1}{txtt}{m}{n}{12}  % for normal
\newenvironment{hproof}{%
  \proof}{\endproof}
\newcommand{\norm}[1]{\left\lVert#1\right\rVert}
\definecolor{deepblue}{rgb}{0,0,0.5}
\definecolor{deepred}{rgb}{0.6,0,0}
\definecolor{deepgreen}{rgb}{0,0.5,0}
\begin{document}
\title{Markets for Efficient Public Good Allocation with Social Distancing}
%
%\titlerunning{Abbreviated paper title}
% If the paper title is too long for the running head, you can set
% an abbreviated paper title here
%
\author{Devansh Jalota\inst{1} \and Qi Qi\inst{2} \and
Marco Pavone\inst{1} \and
Yinyu Ye\inst{1}}
%
%\authorrunning{D. Jalota et al.}
% First names are abbreviated in the running head.
% If there are more than two authors, 'et al.' is used.
%
\institute{Stanford University, Stanford CA 94305, USA \\ \email{\{djalota,pavone,yyye\}@stanford.edu} \and The Hong Kong University of Science and Technology, Hong Kong \\ \email{kaylaqi@ust.hk} \\
}
\maketitle              % typeset the header of the contribution
\begin{abstract}
\normalsize
Public goods are often either over-consumed in the absence of regulatory mechanisms, or remain completely unused, as in the Covid-19 pandemic, where social distance constraints are enforced to limit the number of people who can share public spaces. In this work, we plug this gap through market based mechanisms designed to efficiently allocate capacity constrained public goods. To design these mechanisms, we leverage the theory of Fisher markets, wherein each agent in the economy is endowed with an artificial currency budget that they can spend to avail public goods. While Fisher markets provide a strong methodological backbone to model resource allocation problems, their applicability is limited to settings involving two types of constraints - budgets of individual buyers and capacities of goods. Thus, we introduce a modified Fisher market, where each individual may have additional physical constraints, characterize its solution properties and establish the existence of a market equilibrium. Furthermore, to account for additional constraints we introduce a social convex optimization problem where we perturb the budgets of agents such that the KKT conditions of the perturbed social problem establishes equilibrium prices. Finally, to compute the budget perturbations we present a fixed point scheme and illustrate convergence guarantees through numerical experiments. Thus, our mechanism, both theoretically and computationally, overcomes a fundamental limitation of classical Fisher markets, which only consider capacity and budget constraints.

\keywords{Social Distancing \and Covid-19 \and Fisher Markets \and Market Equilibrium \and Resource Allocation}
\end{abstract}

\section{Introduction} \label{Introduction}
A public good is a product that an individual can consume without reducing its availability to others and of which no one is deprived. These properties are defining features of pure public goods, examples of which include law enforcement, national defense, sewer systems, public parks, and the air we breathe. In reality, almost no good can satisfy the precise definitions of both non-rivalry and non-excludability~\cite{PublicGoods}, as these goods often suffer from over consumption \cite{ProblemPubGoods}, which leads to a decreased utility for consumers. This phenomena becomes more so during the Covid-19 pandemic, where social distance constraints are enforced so that only a limited number of people can share public spaces \cite{NYPubGoods}. A consequence of such constraints is that it results in completely closing parks or beaches \cite{WHOPub}, which leads to goods becoming non-public. Thus, on one hand public goods may no longer satisfy the non-rivalry and non-excludability properties associated with these resources while on the other hand the lack of regulation of public goods results in the consumers behaving in their own selfish interests when using these goods, which leads to the overuse of such shared resources, i.e., tragedy of the commons. In particular, Figure~\ref{CurrentScenario} illustrates these two contrasting outcomes and highlights the need for regulatory mechanisms that impose restrictions on usage of shared resources.

%For instance, public spaces such as beaches are often overcrowded; however, during times such as the Covid-19 pandemic beaches remain completely unused, thereby generating no value to society. 
%\vspace{-15pt}
\begin{figure}[!h]
      \centering
      \includegraphics[width=0.85\linewidth]{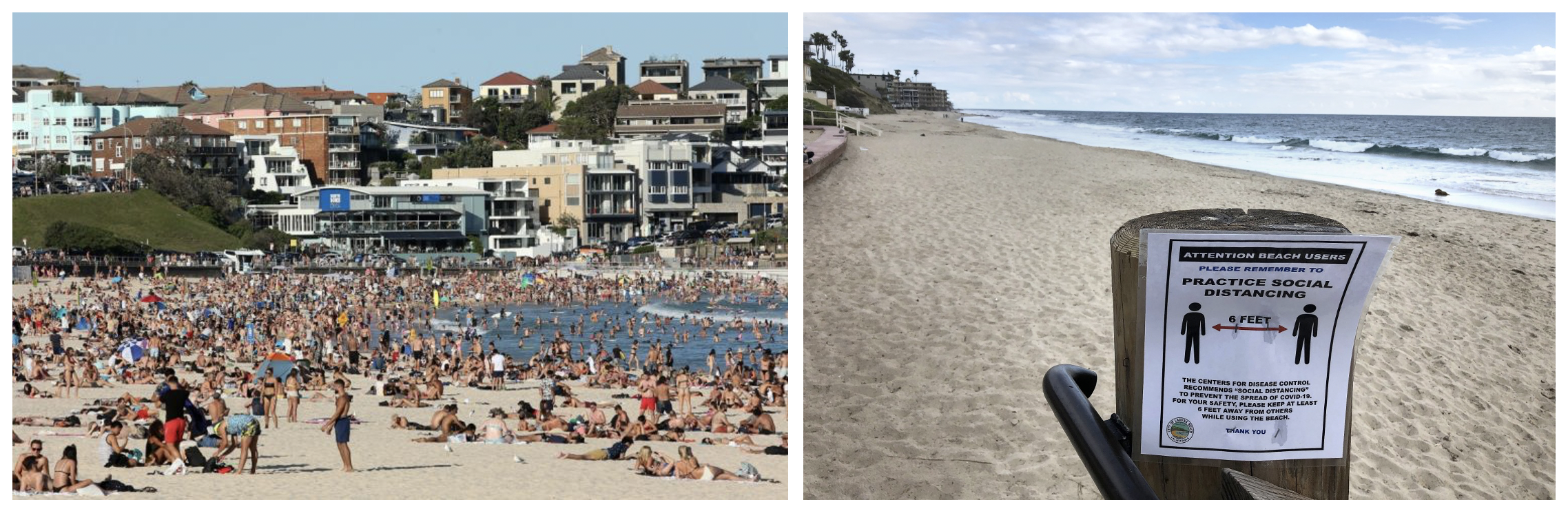}
      \caption{The current scenario involving either an overcrowded beach (left) or a completely unused beach (right) generating no value to society.}
      \label{CurrentScenario}
   \end{figure}
\vspace{-15pt}
In this paper, we attempt to achieve an intermediate between these opposing and undesirable outcomes through market mechanisms to efficiently allocate shared resources. To achieve such a balance, we study capacity constrained public resources, which helps prevent the overcrowding concerns associated with public goods and is in line with situations when strict capacity constraints in the use of public spaces are necessitated. In our work, we leverage the presence of alternatives in the case of many public goods, which enables us to distribute the consumer load over these resources. Examples of such alternatives include myriad local grocery stores and restaurants as well as multiple options for recreational use such as parks and beaches. As demand often outweighs supply of public goods, we need to make decisions on who gets preference to use capacity constrained public spaces. These allocation decisions are facilitated through a pricing mechanism that ensures the formation of a \textit{market equilibrium}, i.e., each agent purchases their most preferred bundle of goods affordable under the set prices. The pricing decisions must be made with fairness considerations, as public goods are by design available to all individuals and of which no person is deprived. We ensure fairness of our mechanism through two methods. First, we endow agents with artificial currency budgets and charge customers this currency for the use of public spaces. The non-monetary transfers ensure that allocation decisions are not biased towards those with higher incomes. Second, when making pricing decisions we simultaneously take into account individual consumer preferences, i.e., each agent's utilities for using public goods, while ensuring that resulting allocations are beneficial for society. 

To study our resource allocation problem under capacity constraints while considering both individual agent preferences and societal benefit, we resort to the canonical model of Fisher markets. In a Fisher market, consumers spend their budget of money (or artificial currency) to buy goods that maximize their individual utilities, while producers sell capacity constrained goods in exchange for currency. A key property of interest with Fisher markets is the formation of an equilibrium when the \textit{market clears}, i.e. all budgets are spent and all goods are sold. At this market equilibrium, buyers get their most preferred bundle of goods, while a social objective is maximized.

We first describe each agent's individual optimization problem in Fisher markets. In this framework, the decision variable for agent $i$ is the quantity of each good $j$ they wish to purchase and is represented by $x_{ij}$. We denote the allocation vector for agent $i$ as $\mathbf{x}_i \in \mathbb{R}^m$, when there are $m$ goods in the market. A key assumption of Fisher markets is that goods are divisible and so fractional allocations are possible, and so we interpret $x_{ij}$ as the probability that agent $i$ is allocated to good $j$. Finally, denoting $w_i$ as the budget of agent $i$, $u_i(\mathbf{x}_i)$ as the utility of agent $i$ as a concave function of their allocation and $\mathbf{p} \in \mathbb{R}_{\geq 0}^m$ as the vector of prices for the goods, individual decision making in Fisher markets can be modelled as the following optimization problem:
%\vspace{-10pt}
\begin{maxi!}|s|[2]                   % mini! = minimize 
    {\mathbf{x}_i \in \mathbb{R}^m}                               % optimization variable
    {u_i(\mathbf{x}_i)  \label{eq:Fisher1}}   % objective function and label
    {\label{eq:Eg001}}             % label for optimizatio problem
    {}                                % optimization result
    \addConstraint{\mathbf{p}^T \mathbf{x}_i}{\leq w_i \label{eq:Fishercon1}}    % constraint 1
    \addConstraint{\mathbf{x}_i}{\geq \mathbf{0}  \label{eq:Fishercon3}}  % constraint 2
\end{maxi!}
Here agent $i$ has a budget constraint~\eqref{eq:Fishercon1} and a non-negativity constraint~\eqref{eq:Fishercon3} on the allocations.

The vector of prices $\mathbf{p} \in \mathbb{R}_{\geq 0}^m$ that each agent observes are computed through the solution of a social optimization problem that aggregates the utilities of all agents. The social objective function used in the Fisher market literature is one wherein the optimal allocation maximizes the budget weighted geometric mean of the buyer's utilities. The choice of the social objective is such that under certain conditions on the utility function, there exists an equilibrium price vector defined as:
\begin{definition}
A vector $\mathbf{p} \in \mathbb{R}_{\geq 0}^{m}$ is an equilibrium price vector if $\sum_{i = 1}^{n} x_{ij}^*(p_j) = \Bar{s}_j$, $\forall j$, i.e., all the goods are sold, where each resource $j$ has a price $p_j$ and has a strict capacity constraint of $\Bar{s}_j \geq 0$, and $\sum_{j = 1}^{m} p_j x_{ij}^*(p_j) = w_i$, $\forall i$, i.e., budgets of all agents are completely used. Furthermore, $x^*(\mathbf{p})_i \in \mathbb{R}^{m}$ is an optimal solution of the individual optimization problem~\eqref{eq:Fisher1}-\eqref{eq:Fishercon3} for all agents $i$.
\end{definition}
%$\mathbf{p}$, such that $\forall j \in [m]$, $\sum_i x^*(\mathbf{p})_{ij} = \Bar{s}_j$, where $x^*(\mathbf{p})_i \in \mathbb{R}^{m}$ is an optimal solution of the individual optimization problem for all agents $i$ and each resource $j$ has a strict capacity constraint of $\Bar{s}_j \geq 0$. 
This price vector is computed as the dual variables of the capacity constraint~\eqref{eq:FisherSocOpt1} in the following social optimization problem:
\begin{maxi!}|s|[2]                   % mini! = minimize 
    {\mathbf{x}_i \in \mathbb{R}^m, \forall i \in [n]}                               % optimization variable
    {u(\mathbf{x}_1, ..., \mathbf{x}_n) = \sum_{i = 1}^{n} w_i \log( u_i(\mathbf{x}_i))  \label{eq:FisherSocOpt}}   % objective function and label
    {\label{eq:FisherExample1}}             % label for optimizatio problem
    {}                                % optimization result
    \addConstraint{\sum_{i = 1}^{n} x_{ij}}{ = \Bar{s}_j, \forall j \in [m] \label{eq:FisherSocOpt1}}    % constraint 1
    \addConstraint{x_{ij}}{\geq 0, \forall i, j \label{eq:FisherSocOpt3}}  % constraint 2
\end{maxi!}
where there are $n$ agents in the economy and $m$ shared resources. We denote $[a]$ as the set $\small\{1, 2, ..., a \small\}$.
%with each resource $j$ having a price $p_j$.  

Since both the individual and social problems are convex optimization problems, the equivalence of their first order KKT conditions is necessary and sufficient at the equilibrium price condition. %$\sum_{i = 1}^{n} x_{ij}^*(p_j) = \Bar{s}_j$, $\forall j \in [m]$. 
This establishes that under the prices set through the solution of the social optimization problem each agent receives their most favourable bundle of goods \cite{DevnaurPrimalDual}. %Furthermore, when the utility function has regularity such as in the case of the commonly used linear and leontief utilities, the market clearing prices in a Fisher market can be computed in polynomial time. 

%Examples of public good alternatives include the presence of multiple routes (and times) that a person can take to complete their trip from origin to destination, different sources or even different seasons that fisher's can get their catch from to optimize for the resulting fish populations and even different times at which people can use common spaces such as grocery stores and parks.

%Resource allocation under capacity constraints has been studied extensively in the past. One of the simplest models is that of Fisher markets, wherein consumers spend their budget of money (or artificial currency) to buy goods that maximize their individual utility functions, while producers sell their goods in exchange for money, which typically has no value to the users. The formation of a Fisher equilibrium is when the \textit{market clears}, i.e. all the budgets are spent and all the goods are sold. At this market equilibrium, the buyers get their most preferred bundle of goods under the prices that are set, while a social objective is maximized. More precisely, individual decision making in Fisher markets can be modelled as the following optimization problem:

While these appealing properties of Fisher markets have been leveraged with great success in applications including online advertising \cite{VaziApps} as well as in revenue optimization \cite{FisherApps}, there is a fundamental limitation with the Fisher market framework that limits its practical use in public goods allocation problems. This limitation stems from the consideration of only two types of constraints - budgets of buyers and capacities of goods - in the Fisher market model. In many public goods allocation problems, the availability of substitutes imposes additional physical constraints that are necessary to consider for the resulting allocation to be meaningful. To model the availability of public good alternatives, we pool together public goods serving similar functionality into their own resource \textit{types}. For instance, if we had multiple grocery stores and parks in a neighborhood, we could pool them into two different resource \textit{types}. We provide further examples of resource \textit{types} and delve into further detail on the constraints associated with these resource \textit{types} in sections~\ref{experiments} and~\ref{Applications}, where we elucidate the real world applications and practical implementation of this work. 

\begin{comment}
Such examples of resource \textit{types} are described in table 1 and we delve into further detail on the constraints associated with these resource \textit{types} in sections~\ref{experiments} and~\ref{Applications}, where we elucidate the real world applications and implementation of this work. 

\begin{table}[tbh]
\caption{Examples of resource \textit{types} in public goods allocation problems}
\centering
\begin{tabular}{|c|c|}
\hline
\textbf{Resource Type}              & \textbf{Context}                          \\ \hline
Type of Public Space                & Multiple Grocery stores in a neighborhood \\ \hline
Time of Day for Use of Public Space & A Beach selling time of use permits       \\ \hline
\end{tabular}
\label{tab:table1}
\end{table}

\end{comment}

The presence of these additional constraints raises the question of whether we can still find appropriate market clearing prices while retaining the desirable properties of the Fisher market framework. This question leads us to the main focus of this paper, which is to:

\begin{center}
\textit{Design a market based mechanism that achieves the same properties as Fisher markets while also supporting additional physical constraints.}    
\end{center}

\subsection{Our Contributions} \label{Contributions}
\vspace{-3pt}
In our pursuit of such a mechanism, we start by defining each agent's individual optimization problem (\textbf{IOP}) with the addition of physical constraints. The properties of this problem \textbf{IOP} turn out to be fundamentally different from the traditional Fisher markets with linear utilities, as we no longer have guarantees on the existence and uniqueness of a market equilibrium. However, we derive a technical condition to overcome the question of the existence of a market equilibrium and provide a thorough characterization of the optimal solution of \textbf{IOP}.
%, which resembles the optimal solution of the individual optimization problem~\eqref{eq:Fisher1}-\eqref{eq:Fishercon3} in traditional Fisher markets.

Having established the existence of a market equilibrium, we then turn to deriving market clearing prices in the addition of physical constraints. We first show that market clearing conditions fail to hold under a natural extension of Fisher markets, wherein we add physical constraints to the social optimization problem~\eqref{eq:FisherSocOpt}-\eqref{eq:FisherSocOpt3} and derive prices using this constraint augmented problem (denoted as \textbf{SOP1}). This negative result points towards a formulation of a new social optimization problem (denoted as \textbf{BP-SOP}) wherein we perturb the budgets of agents by constants that depend on the dual variables of the physical constraints. We then show the following properties of \textbf{BP-SOP}:
%\vspace{-5pt}
\begin{itemize}
    \item \textbf{Market Clearing}: The budget perturbation constants are chosen such that the \textit{market clears} under the prices set corresponding to the dual variables of the capacity constraint of \textbf{BP-SOP}.
    
    \item \textbf{Economic Relevance of Solution}: A consequence of the market clearing property is that the prices are set appropriately so that each agent maximizes their utility subject to the set prices.
    
    \item \textbf{Computational Feasibility}: We present a fixed point scheme to determine the perturbation constants and establish its convergence through numerical experiments.
    %The perturbation constants that we add in the objective function of \textbf{BP-SOP} depend on its dual variables, which we do not know apriori. To determine their true value, we present a fixed point scheme and establish its convergence through experiments. 
\end{itemize}

We note that the physical constraints we consider extend beyond public goods allocation, as such constraints arise on the buyer's side in retail, e-commerce and the AdWords market, as buyers have restrictions on the amount of goods they can purchase and advertisers on the number of people in each demographic class that they can target. In addition, such constraints help in achieving fairness by restricting the purchase of certain goods by individual agents to enable wider access. %Finally, the methods we develop generalize classical Fisher markets to additional constraints that can arise in a range of applications, in public goods or otherwise.  

\begin{comment}
out that certain modifications are necessary to the traditional Fisher market framework to have any hope of finding equilibrium prices. 

In this work, we show that indeed such a modification to Fisher markets exists through the formulation of a new social optimization problem. Our approach is to perturb the budgets of agents by constants that depend on the dual variables of the physical constraints~\eqref{eq:con2}. We then show the following properties of this budget perturbed social optimization problem (denoted as \textbf{BP-SOP}):
%\vspace{-5pt}
\begin{itemize}
    \item \textbf{Market Clearing}: The budget perturbation constants are chosen such that the \textit{market clears} under the prices set corresponding to the dual variables of the capacity constraint of \textbf{BP-SOP}.
    
    \item \textbf{Economic Relevance of Solution}: A consequence of the market clearing property is that the prices are set appropriately so that each agent maximizes their utility subject to the set prices.
    
    \item \textbf{Computational Feasibility}: The perturbation constants that we add in the objective function of \textbf{BP-SOP} depends on its dual variables, which we do not have knowledge of apriori. To determine the true value of the perturbation constants, we present a fixed point iteration scheme and establish the convergence of our procedure through numerical experiments. 
\end{itemize}
\end{comment}
\vspace{-4pt}
\subsection{Related Work} \label{Literature}
\vspace{-3pt}
Setting appropriate market clearing prices has been a prominent topic of research at the intersection of economic and optimization theory. While Walras \cite{walras1954elements} was the first to question whether goods could be priced in a $n$ buyer $m$ good market such that each person receives a bundle of goods to maximize their utilities, it was Arrow and Debreu who established the existence of such a market equilibrium under mild conditions on the utility function of buyers \cite{arrow-debreu}. However, it was not until Fisher that there was an algorithm to compute equilibrium prices and the distribution of the $m$ goods amongst the $n$ buyers in the market \cite{Fisher-seminal}. Later Eisenberg and Gale formulated Fisher's original problem with linear utilities as a convex optimization problem that could be solved in polynomial time \cite{EisGale,Gale}. %Extending this work to more complex utility functions, Chen et al. \cite{FisherPricing} established the existence of Fisher market equilibria for a range of utility functions and specified the conditions under which the market equilibrium can be computed in polynomial time.

While Fisher markets have since been studied extensively in the computer science and algorithmic game theory communities, there has been recent interest in considering settings when additional constraints are added to the traditional Fisher market framework. For instance, Bei et al. \cite{bei-fisher} impose limits on how much sellers can earn and question the assumption that utilities of buyers strictly increase in the amount of the good allocated. A different generalization of Fisher markets is considered by Vazirani \cite{VaziApps}, Devanur et al. \cite{devnaur-spending} and Birnbaum et al. \cite{birnbaum-spending}, wherein utilities of buyers depend on prices of goods through spending constraints. Yet another generalization has been considered by Devanur et al. \cite{Devnaur-generalizations} in which goods can be left unsold as sellers declare an upper bound on the money they wish to earn and budgets can be left unused as buyers declare an upper bound on the utility they wish to derive. Along similar lines Chen et al. \cite{FisherPricing} study equilibrium properties when agents keep unused budget for use in the future. These generalizations of Fisher markets are primarily associated with spending constraints of buyers and earning constraints of sellers; however, to the best of our knowledge there has been no generalization of Fisher markets to the case of additional physical constraints, which are prevalent in public goods allocation problems as well as other applications. 

While such physical constraints have not been studied in the Fisher market literature, there have been other market equilibrium characterizations that take into account such constraints. One notable such work is that on the Combinatorial Assignment problem by Budish \cite{Budish} wherein a market based mechanism is used to assign students to courses while respecting student's schedule constraints. In Budish's framework, courses have strict capacity constraints and students are endowed with budgets and must submit their preferences to a centralized mechanism that provides approximately efficient allocations provided that budgets of students are not all exactly equal. We study the problem of allocating public goods under physical constraints from a different perspective than that considered by Budish. In particular, we approach the public goods allocation problem from the standpoint of setting equilibrium prices that are obtained through the maximization of a societal objective whilst also maximizing individual utilities subject to those prices.

Finally, since we are allocating public goods that are designed to be available to all individuals such that no one can be deprived of them, we must take the fairness of the resulting allocation into account. A popular method to achieve an equal playing ground for all agents has been the use of artificial currencies. For instance, Gorokh et al. established how artificial currencies can be equally distributed to agents to achieve fairness \cite{Banarjee-ArtCur}. We follow a similar idea in our work by endowing agents with artificial currency budgets that they can spend, which helps overcome concerns of priced mechanisms, e.g., congestion pricing, in regulating the use of public resources. 

The rest of this paper is organized as follows. We first present the individual optimization problem \textbf{IOP} and study properties of the corresponding market equilibrium in section~\ref{IOP-properties}. Then, in section~\ref{Impossibility}, we provide a motivation for why modifying Fisher markets is necessary to guarantee market clearing properties with the addition of physical constraints and propose a new budget perturbed social optimization problem that guarantees a market equilibrium. As the budget perturbed problem involves setting the perturbation constants as the dual variables of the added constraints, we present a fixed point iterative procedure to compute these constants in section~\ref{FixedPoint}. Finally, in section~\ref{Applications} we lay out a real world implementation plan of our mechanism and conclude the paper in section~\ref{Conclusion}. 

\section{Properties of the Individual Optimization Problem} \label{IOP-properties}

In this section, we study the individual optimization problem of agents in the addition of physical constraints that are not considered in the Fisher market framework. We start by defining a new individual optimization problem \textbf{IOP} in section~\ref{Model} and study properties regarding the existence and non-uniqueness of a market equilibrium in sections~\ref{counterexample-physical}-\ref{counterexample-uniqueness}. Finally, we close this section through a characterization of the optimal solution of \textbf{IOP} in section~\ref{IOP-Optimal}.

\subsection{Modelling Framework for Individual Optimization Problem} \label{Model}

As in Fisher markets, we model agents as utility maximizers and in this work, each agent's utility function is assumed to be linear in the allocations, which is a common utility function used in the Fisher market literature \cite{EisGale,LinUt}. We model the preference of an agent $i$ for one unit of good $j$ through the utility $u_{ij}$. Furthermore, we extend the Fisher market framework through the consideration of each agent's physical constraints. To model this physical constraint we consider each public good $j$ as belonging to exactly one resource \textit{type}, with the the set of all resource \textit{types} denoted as $T$. We further let $T_i \subseteq T$ denote the resource \textit{types} that each agent $i$ is interested in consuming and for the ease of exposition we normalize each agent's constraints such that each agent would like to obtain at most one unit of goods in each resource \textit{type}. These physical constraints are specified by a matrix $A^{(i)} \in \mathbb{R}^{m \times l_i}$, where $l_i = |T_i|$. Furthermore, the row corresponding to resource \textit{type} $t \in T_i$ is represented as a row vector $A_t^{(i)}$. Using our earlier defined notation to specify budgets and prices, we have the following individual optimization problem (\textbf{IOP}):
\begin{maxi!}|s|[2]                   % mini! = minimize 
    {\mathbf{x}_i \in \mathbb{R}^m}                               % optimization variable
    {u_i(\mathbf{x}_i) = \sum_{j=1}^{m} u_{ij} x_{ij}  \label{eq:eq1}}   % objective function and label
    {\label{eq:Example001}}             % label for optimizatio problem
    {}                                % optimization result
    \addConstraint{\mathbf{p}^T \mathbf{x}_i}{\leq w_i \label{eq:con1}}    % constraint 1
    \addConstraint{A_t^{(i)} \mathbf{x}_i}{\leq 1, \forall t \in T_i \label{eq:con2}}
    \addConstraint{\mathbf{x}_i}{\geq \mathbf{0}  \label{eq:con3}}  % constraint 2
\end{maxi!}
where we have a budget constraint~\eqref{eq:con1}, physical constraints~\eqref{eq:con2} and non-negativity constraint~\eqref{eq:con3}. We note that  $A_t^{(i)} \mathbf{x}_i \leq 1$ is identical to $\sum_{j \in t} x_{ij} \leq 1$, as each agent consumes at most one unit of goods in each resource \textit{type}. We now turn to studying the market equilibrium properties of \textbf{IOP}.
%\vspace{-10pt}

%A primary focus of this work is to find market clearing prices, such that each agent receives their most favored bundle of goods as specified by the individual optimization problem~\eqref{eq:eq1}-\eqref{eq:con3}. While in a Fisher market, such a market clearing price vector is guaranteed to exist, this may not be the case in the presence of additional physical constraints, as elucidated through a counterexample in section~\ref{counterexample-physical}. This counterexample motivates a mild technical condition, presented in section~\ref{guarantee-existence}, under which the market clearing equilibrium is guaranteed to exist in the presence of physical constraints. Finally, we show that the equilibrium may not be unique

%While this is discouraging, we then establish a mild technical assumption under which the market clearing equilibrium is guaranteed to exist in the presence of physical constraints in section~\ref{guarantee-existence}.

\subsection{Market Equilibrium may not Exist} \label{counterexample-physical}

In the traditional Fisher market framework with linear utilities, there exists a unique market equilibrium under mild assumptions \cite{algo-game-theory}. However, in the presence of additional physical constraints, as in the individual optimization problem~\eqref{eq:eq1}-\eqref{eq:con3}, an equilibrium price is not guaranteed to exist. In particular, we elucidate the non-existence of a market equilibrium through the following proposition.

\begin{restatable}{proposition}{nonexistenceeq}
\label{prop:non-existence-eq}
There exists a market wherein each good $j \in [m]$ has a potential buyer $i \in [n]$, i.e., $u_{ij}>0$, but no equilibrium for \textbf{IOP} exists.
\end{restatable}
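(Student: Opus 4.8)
The plan is to exhibit an explicit small market where the budget constraint, the physical (resource-type) constraints, and the market-clearing requirements are mutually incompatible at every candidate price vector. Since a market equilibrium requires simultaneously that (i) every good is fully sold, (ii) every agent exhausts their budget, and (iii) each allocation is individually optimal, I would look for a configuration where these three demands overdetermine the system. The cleanest route is a two-good, single-resource-type example: let both goods lie in the same type $t$, so that the physical constraint $\sum_{j \in t} x_{ij} \le 1$ caps the \emph{total} quantity any single agent can consume across both goods, while market clearing demands that the aggregate consumption equal $\bar{s}_1 + \bar{s}_2$. If the total supply exceeds what the agents' physical constraints permit them to absorb, no allocation can clear the market regardless of prices.

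Concretely, I would take $n$ agents and $m$ goods all belonging to one resource type, with each agent physically limited to one total unit. Then the maximum total quantity the market can absorb is $n$ (summing $A_t^{(i)} \mathbf{x}_i \le 1$ over all agents), so if $\sum_j \bar{s}_j > n$ the clearing condition $\sum_i x_{ij}^* = \bar{s}_j$ is infeasible for \emph{any} price vector $\mathbf{p}$, and no equilibrium exists. I would make sure to satisfy the proposition's hypothesis that every good has a potential buyer by setting $u_{ij} > 0$ for a suitable agent-good pair. The simplest instance is $n = 1$, $m = 2$, one type, $u_{1j} > 0$ for both goods, budget $w_1 > 0$, and capacities with $\bar{s}_1 + \bar{s}_2 > 1$: the agent can never consume more than one total unit, so the two goods cannot both be sold out.

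The key steps, in order, are: first, specify the instance and verify it is a valid \textbf{IOP} instance (so that the $A^{(i)}_t$ encode the single-type pooling and each $u_{ij}>0$ hypothesis holds); second, observe that the physical constraints impose a hard upper bound of $n$ on the aggregate allocation $\sum_{i,j} x_{ij}$ independent of prices; third, note that any equilibrium would force $\sum_{i,j} x_{ij}^* = \sum_j \bar{s}_j$; and fourth, choose capacities so that $\sum_j \bar{s}_j$ strictly exceeds this bound, yielding a contradiction for every $\mathbf{p} \in \mathbb{R}^m_{\ge 0}$. Since the argument quantifies over all price vectors uniformly through a single dimension-counting inequality, no case analysis on $\mathbf{p}$ is needed.

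The main obstacle is not the clearing arithmetic but ensuring the counterexample cannot be dismissed as pathological: I must confirm that the instance genuinely satisfies the stated hypotheses of the proposition — in particular that every good admits a buyer with strictly positive utility — so that non-existence is not merely an artifact of some good being undesired. A secondary subtlety is that I should choose the budget $w_1$ large enough (or the prices irrelevant enough) that the \emph{budget} constraint is never the binding reason for failure; the obstruction must come purely from the interaction of the physical constraints with the supply, which is precisely the new ingredient absent from classical Fisher markets. Getting this attribution right is what makes the example a convincing demonstration that the added physical constraints, and not budgets or capacities alone, break the existence guarantee.
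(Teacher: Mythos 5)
Your construction does prove the literal statement, but it takes a genuinely different---and much weaker---route than the paper, and the difference matters. The paper's counterexample (two buyers, two goods, capacities $\bar{s}_1=1.5$, $\bar{s}_2=0.5$, constraints $x_{11}+x_{12}\le 1$ and $x_{21}+x_{22}\le 1$) is calibrated so that total supply \emph{exactly equals} the agents' total absorption capacity: feasible market-clearing allocations exist (e.g.\ $x_{11}=1$, $x_{21}=x_{22}=0.5$), and the non-existence is driven entirely by a case analysis on $p_1$ showing that no price vector can simultaneously clear good 1 and let buyer 1 exhaust a budget of 15 when each buyer is capped at one unit. That is a genuinely price-theoretic failure. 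Your example instead makes the clearing condition $\sum_i x_{ij}^*=\bar{s}_j$ infeasible for \emph{every} allocation by choosing $\sum_j\bar{s}_j$ strictly larger than the aggregate physical cap $n$; prices, budgets, and utilities play no role.

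The concern with your route is that it exploits a degeneracy the paper implicitly excludes. Elsewhere the authors assume capacities are consistent with the physical constraints precisely so that a clearing allocation exists (in the proof of Theorem~2's converse they require $\sum_{j\in t}\bar{s}_j\le n$ for \textbf{BP-SOP} to be feasible, and the proof of Theorem~1 likewise invokes $\sum_i t_i\ge k$ ``to ensure a feasible solution''). Indeed, your construction would equally manufacture ``non-existence'' in markets satisfying the hypotheses of Theorem~1 (just oversupply a constrained type), which signals that it is refuting a modeling loophole rather than exhibiting the phenomenon the proposition is meant to document---namely that even when a feasible clearing allocation exists, the added physical constraints can make it impossible to support with prices. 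If you add the natural feasibility assumption on capacities, your argument collapses entirely while the paper's example survives. You partly anticipate this worry when you insist the obstruction should not come from the budget constraint, but the same discipline should apply to the capacity specification: to make the counterexample convincing you need supply that the agents \emph{can} absorb, and then the contradiction must be extracted from the interaction of prices, budget exhaustion, and individual optimality, which is exactly the two-case argument on $p_1$ that the paper carries out.
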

To establish Proposition~\ref{prop:non-existence-eq}, we provide a counterexample presented in Appendix section~\ref{non-existence-prop1}.
%While each good having a potential buyer is sufficient to guarantee the existence of an equilibrium in the traditional Fisher market \cite{algo-game-theory}, 

\subsection{Condition to Guarantee Existence of Market Equilibrium} \label{guarantee-existence}

While Proposition~\ref{prop:non-existence-eq} indicates that in general we cannot expect a market equilibrium to exist for \textbf{IOP}, we now show that under a mild condition the market equilibrium is in fact guaranteed to exist. This condition arises from the fact that there may be instances, as in the case of the counterexample to prove Proposition~\ref{prop:non-existence-eq}, when agents cannot spend all of their budget. To guarantee that agents completely spend their budget, we must ensure that there is a good that is not restrained by physical constraints in the market so that agents can purchase more units of it to spend their entire budget. We formalize this notion by establishing the following theorem:

\begin{restatable}{theorem}{marketeq}
\label{thm:market-eq}
There exists a market equilibrium if for any agent $i$, there exists a good $j$, such that $j$ does not belong to any \textit{type}, i.e., it is not associated with any physical constraints, and $i$ has positive utility for all goods, i.e., $u_{ij}>0$, $\forall j$.
\end{restatable}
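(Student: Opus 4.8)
The plan is to prove existence of a competitive equilibrium for \textbf{IOP} by a fixed-point argument on the space of prices, with the two hypotheses isolated into two distinct roles. The strict positivity $u_{ij}>0$ will guarantee that at equilibrium every good carries a positive price and is therefore sold exactly to capacity, while the uncapped good $j_0$ (the good belonging to no \textit{type}) will guarantee that each agent can always exhaust their budget, which is precisely the condition $\sum_j p_j x_{ij}^{*}=w_i$ demanded of an equilibrium.

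First I would restrict prices to the compact convex set $\Delta=\{\mathbf{p}\ge\mathbf{0}:\ \sum_{j} p_j \bar{s}_j = \sum_i w_i\}$, the hyperplane on which any equilibrium must lie, since once every agent spends fully the total revenue equals the total budget. For $\mathbf{p}$ in the relative interior of $\Delta$ each agent's problem~\eqref{eq:eq1}--\eqref{eq:con3} is a bounded linear program — the budget constraint~\eqref{eq:con1} together with $p_j>0$ forces $x_{ij}\le w_i/p_j$ — so its optimal set $X_i^{*}(\mathbf{p})$ is a nonempty compact convex polytope, and by Berge's maximum theorem the aggregate demand $\mathbf{p}\mapsto\sum_i X_i^{*}(\mathbf{p})$ is upper hemicontinuous with nonempty, compact, convex values.

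Next I would verify Walras' law in equality form. Since $u_{ij_0}>0$ and $j_0$ is never bound by a physical constraint~\eqref{eq:con2}, any budget-feasible allocation leaving money unspent can be strictly improved by buying more of $j_0$, so every optimizer satisfies $\mathbf{p}^{T}\mathbf{x}_i^{*}=w_i$ whenever $p_{j_0}>0$; summing over $i$ gives $\mathbf{p}^{T}\big(\sum_i\mathbf{x}_i^{*}-\bar{\mathbf{s}}\big)=0$. Feeding the excess-demand correspondence into the standard price-adjustment map — which shifts weight toward the most over-demanded goods and then projects back onto $\Delta$ — and invoking Kakutani's fixed-point theorem produces a price $\mathbf{p}^{*}$ together with an allocation $\mathbf{x}_i^{*}\in X_i^{*}(\mathbf{p}^{*})$ whose aggregate excess demand $\sum_i\mathbf{x}_i^{*}-\bar{\mathbf{s}}$ is non-positive in every coordinate and orthogonal to $\mathbf{p}^{*}$.

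I expect the main obstacle to be the boundary of $\Delta$, where the demand correspondence loses upper hemicontinuity and becomes unbounded: as $p_{j_0}\to 0$ an agent with $u_{ij_0}>0$ demands an unbounded quantity of the uncapped good. I would handle this in the usual way, replacing demand by its restriction to a large compact box and then showing the truncation is inactive at the fixed point. The positivity hypothesis finishes the argument: were some $p_j^{*}=0$, then good $j$ would be over-demanded (without bound if uncapped, and to full capacity by every agent if capped), which is impossible under clearing on $\Delta$, so $\mathbf{p}^{*}>\mathbf{0}$. Combined with the non-positive, $\mathbf{p}^{*}$-orthogonal excess demand this forces $\sum_i x_{ij}^{*}=\bar{s}_j$ for every $j$, while the full-spending step now applies since $p_{j_0}^{*}>0$, giving $\sum_j p_j^{*} x_{ij}^{*}=w_i$ for every $i$ — exactly the two defining conditions of an equilibrium price vector, which completes the proof.
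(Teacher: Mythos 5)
Your proposal is a correct existence argument in outline, but it travels a genuinely different route from the paper's. You work with the excess-demand \emph{correspondence}, establish upper hemicontinuity and convex-valuedness via Berge, prove Walras' law from the uncapped good, and invoke Kakutani on a price-adjustment map over the hyperplane $\sum_j p_j \bar{s}_j = \sum_i w_i$. The paper instead normalizes capacities and budgets, treats demand as a continuous function on the standard simplex, and runs a Sperner's-lemma coloring argument ($c(\mathbf{p}) = j$ when $f_j(\mathbf{p}) \le 0$ and $p_j \ne 0$) with successively finer triangulations to produce a price $\mathbf{p}^*$ with $f_j(\mathbf{p}^*) \le 0$ for all $j$. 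The two are classically equivalent in power; your version is more honest about the fact that linear utilities make demand set-valued (the paper quietly assumes a continuous demand function), while the paper's version avoids the boundary/unboundedness truncation you correctly anticipate as the technical nuisance of the Kakutani route. Both proofs then converge on the same endgame: the uncapped good with $u_{ij_0}>0$ forces full budget expenditure, and a separate argument handles goods whose price is zero.

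The one genuine soft spot is that endgame for a zero-priced \emph{capped} good. You assert that if $p_j^* = 0$ for a good $j$ belonging to some type $t$, then $j$ is ``over-demanded \ldots to full capacity by every agent,'' and use this to conclude $\mathbf{p}^* > \mathbf{0}$, after which orthogonality of the non-positive excess demand gives clearing. But positivity of utility only forces each agent's \emph{type-level} constraint $\sum_{h \in t} x_{ih} \le 1$ to bind (any slack would be filled with the free good); it does not force the demand for good $j$ itself above $\bar{s}_j$, since agents may concentrate their type allowance on pricier, higher-utility goods in $t$. So you cannot conclude $p_j^* > 0$, and for zero-priced goods the orthogonality argument gives you nothing. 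The fix is the paper's: sum the binding type constraints over agents to get total type demand $\ge \sum_{h\in t}\bar{s}_h$ (this needs the feasibility assumption that the agents' aggregate type allowance covers the type's aggregate capacity, which the paper states explicitly), combine with the fixed-point conclusion that each good's demand is at most its capacity, and deduce that every good in $t$ — including $j$ — clears exactly even at price zero. You do not need $\mathbf{p}^* > \mathbf{0}$; you need clearing, and it holds in this case by the aggregation argument rather than by over-demand.
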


\begin{hproof}
We normalize the capacities of each good and the total budget of all agents to $1$, and consider an excess demand function $f_j(\mathbf{p}) = \sum_{i = 1}^{n} x_{ij}(p_j) - 1$ for $\mathbf{p} \in \Delta_m$, where $\Delta_m$ is a standard simplex. Next, we define a coloring function $c: \mathbf{p} \mapsto \small\{1, ..., m\small\}$, such that $c(\mathbf{p}) = j$ if $f_j(\mathbf{p}) \leq 0$ and $p_j \neq 0$. Such a coloring function on the standard simplex satisfies Sperner's lemma, which implies that we can find a $\mathbf{p}^*$, such that $f_j(\mathbf{p}^*) \leq 0$, $\forall j$, showing $\forall j$ that $\sum_{i = 1}^{n} x_{ij}(p_j^*) \leq 1$.

To prove that the above inequality is an equality, we suppose that $\exists j$, such that  $\sum_{i = 1}^{n} x_{ij}(p_j^*) < 1$. Then we consider two cases: i) $p_j >0$ and ii) $p_j = 0$. For both cases we find contradictions and prove the strict inequality is impossible under the condition that there exists a good $j$ without any physical constraints. This establishes our claim that $\mathbf{p}^*$ is the equilibrium price vector.
%for all $j$, $\sum_{i = 1}^{n} x_{ij}(\mathbf{p}^*) = 1$, with each agent's budget being completely used up, establishing that $\mathbf{p}^*$ is the equilibrium price vector.
\end{hproof}

A detailed proof of this theorem is presented in Appendix section~\ref{existence-market-eq}. We also note the technical assumption is not very demanding. This is because we can achieve the requisite condition by introducing a good, such as "real" money that is available in sufficient quantity and for which each agent in the economy has a strictly positive value. Another method in which this can be achieved is by allowing agents to keep unused budget for use in the future, in which case we can treat budget as a good, which has been considered and analyzed in \cite{FisherPricing}.

\subsection{Market Equilibrium may not be Unique} \label{counterexample-uniqueness}

We now show that even if the market equilibrium exists, it may be that the equilibrium is not unique, which further establishes that the problem of determining a market equilibrium under the addition of physical constraints is fundamentally  different from the traditional Fisher market framework. In particular, we establish non-uniqueness of the market equilibrium through the following proposition.

\begin{restatable}{proposition}{nonuniquenesseq}
\label{prop:non-uniqueness-eq}
The market equilibrium for \textbf{IOP} may not be unique.
\end{restatable}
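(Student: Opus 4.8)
The plan is to establish non-uniqueness by exhibiting an explicit instance of \textbf{IOP} that admits two distinct equilibrium price vectors. Since the statement is negative, a single well-chosen counterexample suffices, and the physical constraints~\eqref{eq:con2} supply precisely the extra degree of freedom needed to break the price uniqueness enjoyed by classical linear Fisher markets. The mechanism I would exploit is that a binding physical constraint \emph{decouples} an agent's demand for a good from that good's price over a range, so that clearing the market imposes no restriction on the corresponding coordinate of $\mathbf{p}$.

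Concretely, I would take a minimal market with two goods (a single agent already works, and the construction replicates across any number of identical agents). Let good $1$ be the sole member of a resource \textit{type}, so its constraint~\eqref{eq:con2} reads $x_{i1} \le 1$, while good $2$ carries no physical constraint; set unit capacities $\bar{s}_1 = \bar{s}_2 = 1$. Assign linear utilities making good $1$ strictly the most valuable per unit, e.g. $u_{i1} > u_{i2} > 0$, and choose the budget $w_i$ so that at equilibrium the agent must exhaust it across both goods. This instance is admissible: good $2$ is unconstrained with $u_{i2}>0$, so the existence hypothesis of Theorem~\ref{thm:market-eq} is satisfied and equilibria are guaranteed to exist.

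The key steps are then as follows. First I would write the KKT conditions of the agent's \textbf{IOP}, isolating the multiplier $\mu_t \ge 0$ of the physical constraint on good $1$ together with the budget multiplier $\lambda$. Next I would observe that whenever the bang-per-buck inequality $u_{i1}/p_1 \ge u_{i2}/p_2$ holds, the agent drives good $1$ to its cap $x_{i1}=1$ (so $\mu_t \ge 0$ is active) and spends the residual budget on good $2$, giving $x_{i2} = (w_i - p_1)/p_2$. Finally I would impose the two equilibrium conditions of the definition, market clearing $\sum_i x_{ij}^* = \bar{s}_j$ and budget exhaustion $\sum_j p_j x_{ij}^* = w_i$. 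The crucial point is that the demand for the capped good $1$ is \emph{constant} in $p_1$ throughout the bang-per-buck region, so clearing good $1$ places no constraint on $p_1$; clearing good $2$ and exhausting the budget then collapse to the single linear relation $p_1 + p_2 = w$, whose solution set (intersected with the bang-per-buck region $u_{i1}/p_1 \ge u_{i2}/p_2$) is a nondegenerate interval rather than a point.

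The main obstacle is verifying that \emph{both} equilibrium conditions hold simultaneously for genuinely distinct price vectors: for each candidate price in the interval one must confirm that the physical constraint on good $1$ is truly active (so demand sits at the cap rather than responding to $p_1$), that $\lambda>0$ so the full budget is spent, and that the leftover budget spent on good $2$ clears exactly its capacity. Once this is checked for two prices $\mathbf{p}^{(1)} \neq \mathbf{p}^{(2)}$ in the interval, both satisfy the equilibrium definition and non-uniqueness follows. I would additionally remark that the construction in fact yields a one-parameter family of equilibria, strengthening the conclusion well beyond the bare statement.
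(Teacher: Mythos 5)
Your construction is correct, and it takes a genuinely different route from the paper. The paper proves the proposition with a fully explicit $3$-buyer, $3$-good instance (capacities $1,2,1$; one capped ``favorite'' good per buyer plus an unconstrained good $3$) and simply verifies that the two concrete price vectors $(11,10,9)$ and $(10,10,10)$ both clear the market under the same allocation. You instead distill the underlying mechanism to its minimal form: when the physical constraint~\eqref{eq:con2} binds, the agent's demand for the capped good is locally insensitive to its price, so market clearing plus budget exhaustion pin down only the sum $p_1+p_2=w_i$, and the equilibrium set is a nondegenerate segment of the line $p_1+p_2=w_i$ intersected with the bang-per-buck region $u_{i1}p_2\geq u_{i2}p_1$ (nondegenerate because $u_{i1}>u_{i2}$, so $p_1=p_2=w_i/2$ lies in its interior). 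This is in fact the same phenomenon driving the paper's example --- there, buyer $1$'s demand is capped at one unit of good $1$ and the two equilibria both satisfy $p_1+p_3=20=w_1$ --- but your version makes the cause visible and yields a one-parameter family rather than just two points. What the paper's version buys is complete numerical explicitness and a multi-buyer setting with no optimality reasoning left to the reader; what yours buys is economy and an explanation of \emph{why} uniqueness fails. To make your argument fully watertight you should (i) pin down concrete parameters (e.g.\ $u_{i1}=2$, $u_{i2}=1$, $w_i=2$, giving equilibria at $(1,1)$ and $(1.2,0.8)$), and (ii) note explicitly that for strict bang-per-buck inequality the unique optimal demand is $x_{i1}=1$, $x_{i2}=(w_i-p_1)/p_2$, so both candidate prices satisfy the equilibrium definition; both checks are routine given what you have written. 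Your instance also satisfies the hypothesis of Theorem~\ref{thm:market-eq}, as you observe, so existence is not in question.
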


We use a counterexample presented in the Appendix section~\ref{non-uniqueness-prop2} to establish the non-uniqueness of the market equilibrium in the addition of physical constraints.

\subsection{Characterizing Optimal Solution of \textbf{IOP}} \label{IOP-Optimal}

Having established the conditions for the existence of a market equilibrium and that one cannot expect the resulting equilibrium to be unique, we now turn to characterizing the optimal solution of the \textbf{IOP}. In traditional Fisher markets, each agent purchases the goods $j^*$ corresponding to the highest \textit{bang-per-buck} ratio, i.e., agents purchase goods such that: $j^* = \argmax_j \left\{\frac{u_{ij}}{p_j} \right\}$. However, in the presence of physical constraints, when a buyer observes a price vector $\mathbf{p}$, which goods will they purchase in each resource \textit{type} and how many different goods will they purchase in each \textit{type}?

To answer these questions, we study the influence of the physical constraints on an agent's decision making problem through the consideration of a feasible solution set for buyer $i$ and resource \textit{type} $t \in T_i$ as follows:

\begin{definition}(Feasible Set). \label{def-sol-set}
Given a price vector $\mathbf{p} \in \mathbb{R}_{\geq 0}^{m}$, a feasible solution set for buyer $i$ and resource \textit{type} $t$ is given by:

$ S_{t} = \left\{(u_t, w_t) | \exists \left\{x_{ij} \right\}_{j \in t}, \sum_{j \in t} x_{ij} \leq 1, x_{ij} \geq 0, \forall j \in t, u_t = \sum_{j \in t} u_{ij} x_{ij}, w_t = \sum_{j \in t} x_{ij} p_j \right\} $
\end{definition}
The above definition specifies the utility and budget of agent $i$ when consuming \textit{type} $t$.

Next, we observe that the solution set $S_t$ can be viewed as lying in the convex hull of the points defined by $(u_{ij}, p_{ij})$, $j \in t$ and the origin in the price-utility plane, as shown by the enclosed region in Figure~\ref{virtual-products}. The lower frontier of this convex hull, as shown in bold, from the origin to $(u_{ij_{\max}}, p_{j_{\max}})$, where $j_{\max} = \argmax_{j \in t} \left\{u_{ij}\right\}$, is piece-wise linear and is characterised by slopes $\theta^t = (\theta_1^t, \theta_2^t, ..., \theta_{k_t}^t)$. As shown on the right in Figure~\ref{virtual-products}, given a fixed budget $w_t$ for \textit{type} $t$, the maximal utility that can be obtained from \textit{type} $t$ must be the intersection of the line $p = w_t$ and the lower frontier of the convex hull when $w_t \leq p_{j_{\max}}$. Otherwise the maximal utility obtained is $u_{ij_{\max}}$. Therefore, an optimal solution of \textbf{IOP} must lie on the lower frontier, with endpoints of the line segments corresponding to goods and line segments corresponding to \textit{virtual products}. Each \textit{virtual product} is characterized by its two endpoints, defined as:
%$C = (u_{ij_1}, p_{j_1})$ and $D = (u_{ij_2}, p_{j_2})$, with a price $p_{j_2} - p_{j_1}$ and utility $u_{ij_2} - u_{ij_1}$, with slope $\theta_{j_1j_2} = \frac{p_{j_2} - p_{j_1}}{u_{ij_2} - u_{ij_1}}$.

\begin{definition} (Virtual Product).
A \textit{virtual product} is characterized by its two endpoints $A = (u_{ij_1}, p_{j_1})$ and $B = (u_{ij_2}, p_{j_2})$ with a slope $\theta_{j_1j_2} = \frac{p_{j_2} - p_{j_1}}{u_{ij_2} - u_{ij_1}}$. Then its \textit{bang-per-buck} $= \frac{1}{\theta_{j_1j_2}} = \frac{u_{ij_2} - u_{ij_1}}{p_{j_2} - p_{j_1}}$.
\end{definition}

\begin{comment}
Since each of the goods in the market belong to exactly one resource \textit{type}, and we have $T_i \subseteq T$ resource \textit{types} that an agent $i$ is interested in consuming, we can start by considering an agents decision making problem within each of the resource \textit{types} separately. In particular, we start by defining a feasible solution set for buyer $i$ and resource \textit{type} $t$ as follows:

\begin{definition} \label{def-sol-set}
Given a price vector $\mathbf{p} \in \mathbb{R}_{\geq 0}^{m}$, a feasible solution set for buyer $i$ and resource \textit{type} $t$ is given by:
$ s_{t} = \left\{(u_t, p_t) | \exists \left\{x_{ij} \right\}_{j \in t}, \sum_{j \in t} x_{ij} \leq 1, x_{ij} \geq 0, \forall j \in t, u_t = \sum_{j \in t} u_{ij} x_{ij}, w_t = \sum_{j \in t} x_{ij} p_j \right\} $
\end{definition}

Next, we consider the price-utility plane and note that all the points corresponding to the solution set $s_t$ can be viewed as lying in the convex hull of the points defined by $(u_{ij}, p_{ij})$, $j \in t$ and the origin, as shown by the enclosed region in Figure~\ref{virtual-products}. If we consider the lower frontier of the convex hull, as shown in bold, we observe that it is piecewise linear and is characterised by slopes $\theta^t = (\theta_1^t, \theta_2^t, ..., \theta_{k_t}^t)$. Next, we observe that each of the piecewise linear segments can be characterized by two points $A = (u_{ij_1}, p_{j_1})$ and $B = (u_{ij_2}, p_{j_2})$, where $\theta_{j_1j_2} = \frac{p_{j_2} - p_{j_1}}{u_{ij_2} - u_{ij_1}}$.
\end{comment}

\begin{figure}[!h]
      \centering
      \includegraphics[width=0.8\linewidth]{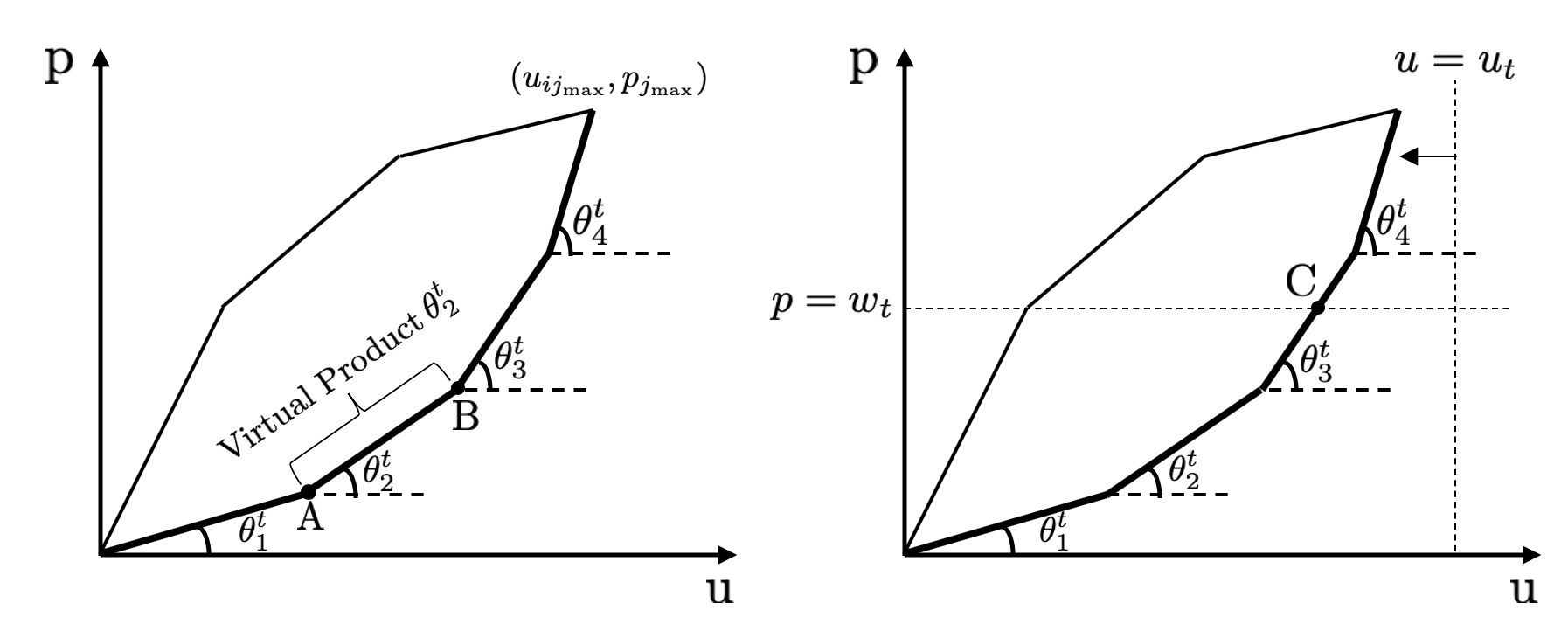}
      \caption{The enclosed region represents the convex hull corresponding to the solution set $S_t$. The vertices on the lower frontier (in bold) correspond to the goods and the segments correspond to \textit{virtual products}. The figure on the right shows that any optimal solution must lie on the lower frontier of the convex hull, as indicated by the point $C$.}
      \label{virtual-products}
   \end{figure}

While in Fisher markets, buyers purchase goods that maximize their \textit{bang-per-buck} ratios, in the presence of physical constraints, we show that agents purchase goods in the descending order of the \textit{virtual products}' \textit{bang-per-buck} ratios by establishing the following theorem:

%Using the above defined notion of a \textit{virtual product} and its \textit{bang-per-buck}, we now show that the optimal solution of \textbf{IOP} entails agents purchasing goods in the descending order of the \textit{virtual products}'s \textit{bang-per-buck} ratios by establishing the following theorem and immediate corollary:

\begin{restatable}{theorem}{iopopt}
\label{thm:iopoptimal}
Given a price vector $\mathbf{p} \in \mathbb{R}_{\geq 0}^{m}$, agent $i$ can obtain their optimal solution $\mathbf{x}_i^* \in \mathbb{R}^{m}$ of the \textbf{IOP} by mixing all \textit{virtual products} from different types together and spending their budget $w_i$ in the descending order of the \textit{virtual products}' bang-per-buck. Furthermore, at most one unit of each virtual product can be purchased by agent $i$.
\end{restatable}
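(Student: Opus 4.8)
The plan is to reduce \textbf{IOP} to a one-dimensional budget-splitting problem across types, exploit the concavity induced by the lower frontier, and then close with a greedy exchange argument. First I would invoke the structural fact established just above the theorem: for a fixed price vector $\mathbf{p}$, an optimal allocation restricted to a single type $t$ must lie on the lower frontier of the convex hull of $\{(u_{ij},p_j)\}_{j\in t}$ together with the origin. This lets me define, for each type $t\in T_i$, a function $g_t(w_t)$ giving the maximum utility obtainable from type $t$ when at most $w_t$ is spent on its goods. Because the frontier is the convex lower boundary, $g_t$ is concave, nondecreasing and piecewise linear on $[0,p_{j_{\max}^t}]$ (and constant beyond it), and its successive slopes are exactly the bang-per-buck ratios $1/\theta^t_1 > 1/\theta^t_2 > \cdots$ of the virtual products of type $t$, listed in decreasing order. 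I would then argue that \textbf{IOP} is equivalent to the separable concave program $\max \sum_{t\in T_i} g_t(w_t)$ subject to $\sum_{t\in T_i} w_t \le w_i$ and $w_t \ge 0$: any feasible $\mathbf{x}_i$ induces the split $w_t = \sum_{j\in t} p_j x_{ij}$ whose type-$t$ utility is at most $g_t(w_t)$, while $\sum_t w_t = \mathbf{p}^T\mathbf{x}_i \le w_i$; conversely any feasible split is realized by on-frontier allocations in each type. Hence the two problems share an optimal value and an optimizer of one recovers an optimizer of the other.

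Next I would solve the separable program by a standard exchange (equivalently, water-filling/KKT) argument. Since each $g_t$ has piecewise-constant, strictly decreasing marginal return equal to the virtual products' bang-per-buck, pooling all virtual products across types and funding them in globally descending bang-per-buck order until $w_i$ is exhausted maximizes $\sum_t g_t(w_t)$. Concretely, if a candidate solution spent a marginal dollar on a virtual product of lower bang-per-buck while a higher-bang-per-buck virtual product were not yet fully purchased, transferring that dollar would weakly (strictly, away from ties) increase total utility, so every optimum must follow the descending order; equivalently, the optimal split is the KKT point at which all funded marginal segments have bang-per-buck at least the budget multiplier and all unfunded ones at most it.

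For the \emph{at most one unit} claim and to confirm feasibility of the cross-type mixing, I would read off the frontier geometry: purchasing one full unit of the first virtual product of type $t$ means setting $x_{ij_1}=1$, while one full unit of a later virtual product means fully swapping the cheaper endpoint good for the next, and in both cases the segment parameter is capped at $1$ by the normalization $\sum_{j\in t} x_{ij}\le 1$, so no virtual product is ever bought more than once. The point that makes the global greedy consistent is frontier convexity: within any single type the virtual products already occur in decreasing bang-per-buck order, so the prerequisite that a later (lower-bang-per-buck) virtual product of a type can only be bought after its predecessor is automatically honored by the global descending rule, and no infeasibility arises. The main obstacle I anticipate is precisely this equivalence step — rigorously justifying that restricting to the per-type frontiers loses no optimality and that the within-type purchase-order constraints never conflict with the global greedy order; once that is secured, the greedy optimality and the unit cap follow routinely.
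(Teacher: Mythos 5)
Your proposal is correct and follows essentially the same route as the paper: the paper likewise first argues that the restriction of an optimal $\mathbf{x}_i^*$ to each type $t$ solves the per-type subproblem with budget $w_t=\sum_{j\in t}p_jx_{ij}^*$ (hence lies on the lower frontier, i.e., buys virtual products in ascending slope order within the type), and then establishes the global descending bang-per-buck order by exactly the marginal exchange you describe, carried out explicitly with an $\epsilon$--$\tau$ perturbation on two virtual products $\theta_a\in t$, $\theta_b\in t'$ that yields a utility gain $\epsilon(u_{ij_4}-u_{ij_3})(\theta_b/\theta_a-1)>0$. Your packaging of the second step as a separable concave program $\max\sum_t g_t(w_t)$ solved by water-filling is a clean reformulation of the same argument rather than a different proof, and the equivalence step you flag as the main obstacle is already adequately secured by the two-sided bound you give.
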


Note that when two \textit{virtual products} have the same slope, irrespective of whether they are in the same \textit{type} or different \textit{types}, ties can be broken arbitrarily. Further, an immediate corollary follows, which answers the question of how many different goods an agent will purchase in each \textit{type}.

\begin{restatable}{corollary}{coroptsol}
\label{cor:coroptsol}
For any agent $i$, there exists an optimal solution $\mathbf{x}_i^* \in \mathbb{R}^{m}$, such that $i$ purchases two different goods in at most one resource type. For all other resource types, agent $i$ buys at most one good.
\end{restatable}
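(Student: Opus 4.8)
The plan is to derive the corollary directly from the greedy characterization in Theorem~\ref{thm:iopoptimal}. By that theorem there is an optimal solution $\mathbf{x}_i^*$ obtained by listing every \textit{virtual product} across all types $t \in T_i$ in weakly decreasing order of bang-per-buck and purchasing them one unit at a time until the budget $w_i$ is exhausted. I would first record the key geometric fact that within a single type $t$ the virtual products are exactly the segments of the lower frontier of the convex hull of $\{(u_{ij},p_j)\}_{j \in t} \cup \{O\}$, and that, because this frontier is convex, its segment slopes $\theta_1^t < \theta_2^t < \cdots$ are strictly increasing; hence the bang-per-buck values $1/\theta_1^t > 1/\theta_2^t > \cdots$ are strictly decreasing along the frontier. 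Consequently, in the global descending bang-per-buck ordering the virtual products belonging to any fixed type $t$ appear in frontier order, so that whatever subset of them the greedy purchases forms a \emph{consecutive prefix} of type $t$'s frontier starting at the origin (the strictness removes any ambiguity from cross-type tie-breaking).

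Next I would translate "purchasing a prefix of virtual products in type $t$" back into an allocation over goods. Traversing the frontier from the origin through the vertices $g_0^t = O, g_1^t, g_2^t, \dots$, each vertex corresponds to placing a full unit on a single good, whereas the interior of a segment $[g_{l-1}^t, g_l^t]$ corresponds to a convex combination $x_{g_{l-1}^t} = 1-\mu$, $x_{g_l^t} = \mu$ of its two endpoint goods (with only $g_1^t$ involved on the first segment $[O,g_1^t]$). Thus a \emph{fully} purchased prefix always terminates at a vertex and yields an allocation supported on at most one good in type $t$ (completing a segment zeroes out the previous good), while a \emph{partially} purchased final segment yields an allocation supported on at most two goods.

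The crux is then the observation that the greedy exhausts the budget while partially purchasing at most one virtual product: every virtual product is bought either fully (one unit) or not at all, except possibly the single one during which $w_i$ runs out. Since this uniquely fractionally purchased virtual product lies in exactly one type $t^*$, only type $t^*$ can receive a two-good, segment-interior allocation; every other type's purchases form a fully completed prefix and hence land on a vertex, giving at most one good. If the budget is not binding, no segment is purchased partially and every type lands on a vertex, so no type has two goods. In all cases at most one type carries two distinct goods and the rest carry at most one, which is the claim. The main obstacle to make rigorous is the bookkeeping in the second step — verifying that the prefix/partial structure within each type is faithfully reflected in the support of $\mathbf{x}_i^*$ — but this follows directly from the strict monotonicity of bang-per-buck along the convex frontier established in the first step.
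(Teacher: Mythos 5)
Your proposal is correct and follows essentially the same route as the paper: invoke the greedy characterization of Theorem~\ref{thm:iopoptimal}, note that at most one \textit{virtual product} (the last one before the budget runs out) is purchased fractionally, and observe that fully completed prefixes within a type land on a vertex of the lower frontier (one good) while the single partially purchased segment yields at most two goods in its unique type. Your explicit treatment of the consecutive-prefix structure and the strictly increasing frontier slopes is a slightly more careful bookkeeping of what the paper states informally, but the underlying argument is the same.
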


A detailed proof of Theorem~\ref{thm:iopoptimal} and Corollary~\ref{cor:coroptsol} are presented in Appendix sections~\ref{iop-optimal-pf} and~\ref{cor-optimal-pf} respectively. We note that the optimal solution $\mathbf{x}_i^*$ is the optimal solution for \textbf{IOP} given any price vector $\mathbf{p}$, which may not be the equilibrium price. The optimal solution may not be unique and Corollary~\ref{cor:coroptsol} states there exists such an optimal solution but this does not imply all solutions must satisfy these conditions. Furthermore, note that these results can easily be extended to the case when there exists some product that does not have any physical constraint. The detailed discussion and interpretation by examples of the optimal solution of the \textbf{IOP} are presented in Appendix sections~\ref{rmk-iopopt} and~\ref{examples-iopopt}. Having established properties of the \textbf{IOP}, we now turn to the problem of deriving equilibrium prices in the market in the addition of physical constraints.

\section{Generalizing the Fisher Social Optimization Problem to Accommodate Physical Constraints} \label{Impossibility}

A desirable property of a Fisher market is that the market equilibrium outcome maximizes a social objective while  individuals receive their most favoured bundle of goods given the set prices. As this property of Fisher markets holds when we consider budget and capacity constraints, a natural question to ask is whether we can still achieve this property under the addition of the physical constraints~\eqref{eq:con2}. We start by showing that the addition of these constraints and under no further modifications to Fisher markets, market clearing conditions fail to hold. To do this we define the social optimization problem (\textbf{SOP1}) with additional constraints in section~\ref{DefSocOpt} and then compare the KKT conditions of the two problems (\textbf{IOP} and \textbf{SOP1}) to establish that a market clearing outcome does not hold in section~\ref{KKTImposs}.

We then address this negative result by defining a perturbed social optimization problem (\textbf{BP-SOP}) in section~\ref{ReformSocOpt} in which we adjust the budgets of the agents. Then, in section~\ref{SocOptKKT}, we show how to choose these budget perturbations to guarantee the equivalence of its KKT conditions with that of the \textbf{IOP} when prices are set based on the dual variables of \textbf{BP-SOP}s capacity constraints. Finally, we provide an economic interpretation of our budget perturbed formulation in section~\ref{InterpSocOpt}.

\subsection{A Social Optimization Problem with Additional Constraints} \label{DefSocOpt}

We first define the natural extension of the Fisher market social optimization problem~\eqref{eq:FisherSocOpt}-\eqref{eq:FisherSocOpt3} with the addition of physical constraints~\eqref{eq:con2}. We note that barring the additional constraints, the following problem ~\eqref{eq:ImpossSocOpt}-\eqref{eq:ImpossSocOpt3} (\textbf{SOP1}) is identical to the Fisher market social optimization problem. 
\vspace{-15pt}
\begin{maxi!}|s|[2]                   % mini! = minimize 
    {\mathbf{x}_i \in \mathbb{R}^m, \forall i \in [n]}                               % optimization variable
    {u(\mathbf{x}_1, ..., \mathbf{x}_n) = \sum_{i=1}^{n} w_i \log \left( \sum_{j=1}^{m} u_{ij}x_{ij} \right)  \label{eq:ImpossSocOpt}}   % objective function and label
    {\label{eq:ImpossExample1}}             % label for optimizatio problem
    {}                                % optimization result
    \addConstraint{\sum_{i=1}^{n} x_{ij}}{ = \Bar{s}_j, \forall j \in [m] \label{eq:ImpossSocOpt1}}    % constraint 1
    \addConstraint{A_t^{(i)} \mathbf{x}_i}{\leq 1, \forall t \in T_i, \forall i \in [n] \label{eq:ImpossSocOpt2}}
    \addConstraint{x_{ij}}{\geq 0, \forall i, j \label{eq:ImpossSocOpt3}}  % constraint 2
\end{maxi!}

\subsection{A KKT Comparison of \textbf{IOP} and \textbf{SOP1}} \label{KKTImposs}

In the original Fisher Market formulation, the equilibrium price vector corresponds to the dual variables of the capacity constraints of the social optimization problem, and at this equilibrium price, the KKT conditions of the individual and social optimization problems are equivalent \cite{DevnaurPrimalDual}. We follow a similar approach when considering the individual and social optimization problems with additional constraints (\textbf{IOP} and \textbf{SOP1} respectively) and show that market clearing conditions fail to exist by establishing the following result.
%these problems are not KKT equivalent at the prices derived from the dual variables of the capacity constraints and thus market clearing conditions fail to exist by establishing the following result.

%a market clearing outcome, i.e. when all the budgets are exhausted and all goods are sold up to capacity, was guaranteed to exist when the KKT conditions of the individual and social optimization problems were equivalent \cite{DevnaurPrimalDual}. We follow a similar approach when considering the social and individual optimization problems with additional constraints (\textbf{IOP} and \textbf{SOP1} respectively) and show that these problems are not KKT equivalent and thus market clearing conditions fail to exist by establishing the following result. 

\begin{restatable}{theorem}{imposs}
\label{thm:imposs}
The price vector $\mathbf{p} \in \mathbb{R}_{\geq 0}^{m}$ corresponding to the optimal dual variables of the capacity constraint~\eqref{eq:ImpossSocOpt1} of \textbf{SOP1} is not an equilibrium price, i.e., the market clearing KKT conditions of \textbf{IOP} and \textbf{SOP1} are not equivalent.
\end{restatable}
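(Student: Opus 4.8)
The plan is to write out the first-order KKT stationarity conditions for both \textbf{IOP} and \textbf{SOP1}, set the candidate price vector $\mathbf{p}$ equal to the optimal dual variables of the capacity constraint~\eqref{eq:ImpossSocOpt1}, and then show that the two stationarity systems force each agent to spend a \emph{different} amount of their budget, so that the market-clearing requirement in the definition of an equilibrium price cannot hold. For \textbf{IOP} I attach a multiplier $\beta_i \geq 0$ to the budget constraint~\eqref{eq:con1}, multipliers $\mu_{it} \geq 0$ to the physical constraints~\eqref{eq:con2}, and $\gamma_{ij} \geq 0$ to~\eqref{eq:con3}; using that each good $j$ lies in a unique \textit{type} $t(j)$, stationarity reads $u_{ij} - \beta_i p_j - \mu_{i,t(j)} + \gamma_{ij} = 0$. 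For \textbf{SOP1} the dual variable of~\eqref{eq:ImpossSocOpt1} is exactly the price $p_j$, while I attach $\lambda_{it} \geq 0$ to~\eqref{eq:ImpossSocOpt2} and $\delta_{ij} \geq 0$ to~\eqref{eq:ImpossSocOpt3}; writing $u_i = \sum_k u_{ik} x_{ik}$, its stationarity reads $\frac{w_i u_{ij}}{u_i} - p_j - \lambda_{i,t(j)} + \delta_{ij} = 0$.

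The heart of the argument is a budget-accounting identity extracted from the \textbf{SOP1} conditions. Multiplying the \textbf{SOP1} stationarity condition by $x_{ij}$ and summing over $j$ yields $w_i - \mathbf{p}^T \mathbf{x}_i - \sum_{j} \lambda_{i,t(j)} x_{ij} + \sum_j \delta_{ij} x_{ij} = 0$, since $\sum_j \frac{w_i u_{ij}}{u_i} x_{ij} = w_i$. Complementary slackness eliminates the non-negativity term ($\delta_{ij} x_{ij} = 0$), and grouping goods by \textit{type}, together with $A_t^{(i)}\mathbf{x}_i = \sum_{j \in t} x_{ij}$ and $\lambda_{it}(A_t^{(i)}\mathbf{x}_i - 1) = 0$, collapses the physical-constraint term into $\sum_{t \in T_i} \lambda_{it}$. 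Hence at the \textbf{SOP1} optimum each agent spends $\mathbf{p}^T \mathbf{x}_i = w_i - \sum_{t \in T_i} \lambda_{it}$. In particular, whenever a physical constraint~\eqref{eq:ImpossSocOpt2} is binding for agent $i$ (so that $\lambda_{it} > 0$), that agent spends strictly less than their full budget $w_i$.

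I then close the argument against the definition of an equilibrium price. The \textbf{SOP1} solution satisfies the capacity constraint~\eqref{eq:ImpossSocOpt1} by construction, so all goods are sold; but an equilibrium price additionally requires $\sum_j p_j x_{ij}^* = w_i$ for every agent, which the identity above violates as soon as some $\lambda_{it} > 0$. Equivalently, the \textbf{SOP1} stationarity shows that each agent implicitly faces the \emph{personalized} augmented price $p_j + \lambda_{i,t(j)}$ rather than the common market price $p_j$ appearing in~\eqref{eq:con1}, so the solution of each agent's \textbf{IOP} at the common prices $\mathbf{p}$ cannot coincide with the \textbf{SOP1} allocation, and the two sets of market-clearing KKT conditions are not equivalent. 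To make the non-equivalence concrete rather than vacuous, I would invoke the hypothesis of Theorem~\ref{thm:market-eq}, under which a genuine equilibrium demands full budget expenditure, and exhibit an instance in which at least one physical constraint is active at the \textbf{SOP1} optimum; the counterexample machinery already used for Proposition~\ref{prop:non-existence-eq} supplies such a binding instance.

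Main obstacle: I expect the delicate point to be precisely the reduction of the physical-constraint term to $\sum_{t \in T_i} \lambda_{it}$ via complementary slackness, followed by the need to argue that binding physical constraints genuinely occur so that the discrepancy $\sum_{t} \lambda_{it}$ is strictly positive in some instance. Without the latter the claim would be empty, since with no active physical constraints \textbf{SOP1} degenerates to the ordinary Fisher social problem~\eqref{eq:FisherSocOpt}--\eqref{eq:FisherSocOpt3}, whose capacity dual prices \emph{are} equilibrium prices.
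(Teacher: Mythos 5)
Your proposal is correct and follows essentially the same route as the paper: both derive the \textbf{SOP1} stationarity condition, multiply by $x_{ij}$, sum over $j$, and use complementary slackness to obtain the budget-accounting identity $\mathbf{p}^T\mathbf{x}_i = w_i - \sum_{t} r_{it}$, then argue that the physical-constraint duals are strictly positive in nondegenerate instances so that budgets are not exhausted and market clearing fails. Your explicit remark that a witness instance with a binding physical constraint is needed (lest the claim be vacuous) is, if anything, a slightly more careful framing of the same point the paper makes informally when it argues that $r_{it}=0$ for all $i,t$ cannot hold whenever loosening a binding constraint would increase the objective.
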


\begin{hproof}
We derive the first order necessary and sufficient KKT conditions of the social optimization problem \textbf{SOP1} and show that under the optimal price vector corresponding to the dual variables of the capacity constraint, the budgets of the agents will not be completely used up. As a result, a market clearing equilibrium cannot hold.
\end{hproof}
%\newline
A detailed proof of this claim is presented in the Appendix section~\ref{Appendix1}. This result establishes that the prices a social planner would set through the solution of \textbf{SOP1} does not guarantee a market clearing outcome, as there would be agents with unused budgets. We now address this negative result through a modification to \textbf{SOP1} that will enable us to set market clearing prices under the addition of physical constraints.

%The result further highlights that a naive extension of Fisher markets fails to generalize to the addition of constraints. 

%This brings us to the question of whether we can still leverage Fisher markets to set appropriate prices for public good allocation problems. We answer this question in the rest of this paper through a modification to the original Fisher market framework that will enable us to set market clearing prices under the addition of physical constraints. 

%To do this we will need to define a new social optimization problem and thus establish an equivalence between the social and individual optimization problems.

%of the agents in the social optimization problem. We then show the equivalence of its KKT conditions with the individual optimization problem~\eqref{eq:eq1}-~\eqref{eq:con3} and provide an interpretation of the problem in the context of real world applications.

\subsection{A Budget Perturbed Social Optimization Problem} \label{ReformSocOpt}

We present a reformulated social optimization problem in which we modify the budget of agents through a variable $\lambda_i$ for each agent $i$. This variable is introduced because of the additional physical constraints that are not present in the original Fisher market problem. The exact value that $\lambda_i$ should take so that the the market clears is derived in detail in the KKT analysis in section~\ref{SocOptKKT}. The Budget Perturbed Social Optimization Problem (\textbf{BP-SOP}) is represented as:
\vspace{-5pt}
\begin{maxi!}|s|[2]                   % mini! = minimize 
    {\mathbf{x}_i \in \mathbb{R}^m, \forall i \in [n]}                               % optimization variable
    {u(\mathbf{x}_1, ..., \mathbf{x}_n) = \sum_{i=1}^{n} (w_i + \lambda_i) \log \left(\sum_{j = 1}^{m} u_{ij}x_{ij} \right)
    \label{eq:SocOpt}}   % objective function and label
    {\label{eq:Example01}}             % label for optimizatio problem
    {}                                % optimization result
    \addConstraint{\sum_{i=1}^{n} x_{ij}}{ = \Bar{s}_j, \forall j \label{eq:SocOpt1}}    % constraint 1
    \addConstraint{A_t^{(i)} \mathbf{x}_i}{\leq 1, \forall t \in T_i, \forall i \in [n] \label{eq:SocOpt2}}
    \addConstraint{x_{ij}}{\geq 0, \forall i, j \label{eq:SocOpt3}}  % constraint 2
\end{maxi!}
with capacity constraints~\eqref{eq:SocOpt1}, physical constraints~\eqref{eq:SocOpt2} and non-negativity constraints~\eqref{eq:SocOpt3}. 

%While the choice of the budget perturbation constants is made to set prices appropriately to achieve a market clearing outcome, these constants modify the objective function, which leads to the maximization of a quantity that may not be related to the unperturbed social objective in \textbf{SOP1}. 

%However, we show in section~\ref{InterpSocOpt} that the resulting allocations in the presence of these budget perturbation constants will still guarantee that agents receive the same utilities as in the absence of these constants. Thus, the sole effect of these perturbation constants is to ensure that prices are set at the right level in the market. 

\subsection{Deriving Perturbation Constants Using KKT Conditions} \label{SocOptKKT}

We now show that under an appropriate choice of the $\lambda_i$ perturbations for all agents $i$, the KKT conditions of \textbf{BP-SOP} and \textbf{IOP} are equivalent when prices are set in the market through the dual variables of the capacity constraints~\eqref{eq:SocOpt1}. Observing that for any choice of $\lambda = (\lambda_1, ..., \lambda_n) $, \textbf{BP-SOP} remains a convex optimization problem, it is necessary and sufficient to verify the first order KKT conditions for \textbf{BP-SOP} and \textbf{IOP}. To establish the first order KKT equivalence between the two problems, we first define $r_{it}$ as the dual variable associated with the allocation constraint~\eqref{eq:SocOpt2} associated with agent $i$ and good \textit{type} $t$. Further, we define a \textit{fixed point} of the problem \textbf{BP-SOP} as one when $\lambda_i = \sum_{t = 1}^{l_i} r_{it}$, where $l_i = |T_i|$. The reasons for this choice of a \textit{fixed point} is its use in establishing the market equilibrium through the following theorem:

%We now show that a fixed point for \textbf{BP-SOP} will set equilibrium prices in the market by establishing the following theorem:
\begin{restatable}{theorem}{thmm}
\label{thm:thm2}
There is a one-to-one correspondence of the equilibrium price vector $\mathbf{p} \in \mathbb{R}_{\geq 0}^{m}$ and a fixed point solution of \textbf{BP-SOP}, i.e., $\lambda_i = \sum_{t = 1}^{l_i} r_{it}$, $\forall i$, where $r_{it}$ is the optimal dual multiplier of the constraint $A_t^{(i)} \mathbf{x}_i \leq 1$ in \textbf{BP-SOP}.
\end{restatable}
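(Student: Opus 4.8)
The plan is to prove the claimed bijection by showing that, at any point satisfying the fixed-point relation $\lambda_i = \sum_{t} r_{it}$, the first-order KKT system of \textbf{BP-SOP} is equivalent --- after a per-agent rescaling of the multipliers --- to the KKT system of \textbf{IOP} together with the two market-clearing conditions, and conversely. Since both problems are convex for any fixed $\lambda$, verifying the first-order KKT conditions is necessary and sufficient throughout. First I would write out the stationarity condition of \textbf{BP-SOP}: introducing $p_j$ for the capacity constraint~\eqref{eq:SocOpt1}, $r_{it}$ for the physical constraint~\eqref{eq:SocOpt2}, and $\eta_{ij}$ for non-negativity, stationarity in $x_{ij}$ reads $\frac{(w_i+\lambda_i)u_{ij}}{u_i(\mathbf{x}_i)} = p_j + r_{it} - \eta_{ij}$, where $t$ denotes the unique \textit{type} containing good $j$. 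The analogous \textbf{IOP} stationarity, with budget multiplier $\beta_i$, \textit{type} multiplier $\gamma_{it}$, and non-negativity multiplier $\mu_{ij}$, reads $u_{ij} = \beta_i p_j + \gamma_{it} - \mu_{ij}$.

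The crux of the argument is a budget-accounting identity obtained by multiplying the \textbf{BP-SOP} stationarity condition by $x_{ij}$ and summing over $j$. Using $\sum_j u_{ij}x_{ij} = u_i(\mathbf{x}_i)$, the complementary slackness $\eta_{ij}x_{ij}=0$, and the regrouping of the physical-constraint term by \textit{type}, namely $\sum_{t\in T_i} r_{it}\sum_{j\in t} x_{ij} = \sum_{t\in T_i} r_{it}(A_t^{(i)}\mathbf{x}_i) = \sum_{t\in T_i} r_{it}$ (the last equality by complementary slackness of~\eqref{eq:SocOpt2}), this collapses to $\mathbf{p}^{T}\mathbf{x}_i = w_i + \lambda_i - \sum_{t} r_{it}$. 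Imposing the fixed-point condition $\lambda_i = \sum_{t} r_{it}$ therefore forces $\mathbf{p}^{T}\mathbf{x}_i = w_i$ for every agent, i.e. each budget is exactly exhausted; combined with the equality capacity constraint~\eqref{eq:SocOpt1}, which makes all goods sold automatically, this yields precisely the two market-clearing conditions of the equilibrium definition.

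It then remains to show that the same $\mathbf{x}_i$ is \textbf{IOP}-optimal at prices $\mathbf{p}$, which I would do by exhibiting the multiplier correspondence $\beta_i = \frac{u_i(\mathbf{x}_i)}{w_i+\lambda_i}$, $\gamma_{it} = \beta_i r_{it}$, $\mu_{ij} = \beta_i \eta_{ij}$. Under this rescaling the \textbf{BP-SOP} stationarity condition transforms term-by-term into the \textbf{IOP} stationarity condition; since $\beta_i>0$, dual feasibility and complementary slackness are preserved, and the tight budget constraint is consistent with $\beta_i>0$. This gives one direction of the correspondence. For the converse, starting from an equilibrium price $\mathbf{p}$ with \textbf{IOP}-optimal $\mathbf{x}_i^*$ and multipliers $\beta_i,\gamma_{it},\mu_{ij}$, I would invert the map by setting $r_{it}=\gamma_{it}/\beta_i$, $\eta_{ij}=\mu_{ij}/\beta_i$, and $\lambda_i=\sum_t r_{it}$; multiplying the \textbf{IOP} stationarity by $x_{ij}^*$ and summing gives $u_i(\mathbf{x}_i^*)=\beta_i w_i + \sum_t \gamma_{it}$, which is exactly the consistency relation $\beta_i = \frac{u_i(\mathbf{x}_i^*)}{w_i+\lambda_i}$, so these multipliers satisfy the \textbf{BP-SOP} KKT system at the fixed point (nonnegativity of the price dual $p_j$ being inherited from the standard Fisher-market reasoning).

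The main obstacle I anticipate is not any single computation but ensuring the correspondence is genuinely well-defined and invertible: the relation $\lambda_i=\sum_t r_{it}$ is self-referential, since $r_{it}$ is itself a multiplier of the $\lambda$-dependent problem \textbf{BP-SOP}. For the statement of the theorem this circularity is harmless --- I only need that whenever a solution satisfies the relation the equivalence holds, not that such a $\lambda$ is computable --- but I must verify that the rescaling factor $\beta_i=u_i(\mathbf{x}_i)/(w_i+\lambda_i)$ is strictly positive and finite, which requires $u_i(\mathbf{x}_i)>0$. I would justify this using the positivity of budgets and of the utilities underlying the existence guarantee of Theorem~\ref{thm:market-eq}, noting $\lambda_i=\sum_t r_{it}\geq 0$ so that $w_i+\lambda_i>0$. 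The actual computation of a $\lambda$ satisfying the fixed-point relation is deferred to the iterative scheme of Section~\ref{FixedPoint}.
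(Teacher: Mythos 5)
Your overall strategy---comparing the first-order KKT systems of \textbf{BP-SOP} and \textbf{IOP}, using the budget-accounting identity $\mathbf{p}^{T}\mathbf{x}_i = w_i + \lambda_i - \sum_t r_{it}$ obtained from complementary slackness, and relating the two multiplier systems by the per-agent rescaling $\beta_i = u_i(\mathbf{x}_i)/(w_i+\lambda_i)$---is exactly the route the paper takes, and the fixed-point-to-equilibrium direction of your argument is essentially identical to the paper's. However, there is a genuine gap in your equilibrium-to-fixed-point direction: your inversion $r_{it} = \gamma_{it}/\beta_i$, $\eta_{ij} = \mu_{ij}/\beta_i$ divides by the \textbf{IOP} budget multiplier $\beta_i$, and you have not ruled out $\beta_i = 0$. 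Your positivity remark does not close this: you argue that the \emph{constructed} quantity $u_i(\mathbf{x}_i)/(w_i+\lambda_i)$ is positive, which is relevant only in the other direction where $\beta_i$ is defined by that formula. In the converse direction $\beta_i$ is a \emph{given} KKT multiplier of \textbf{IOP}, and complementary slackness only forces $\beta_i(\mathbf{p}^T\mathbf{x}_i^* - w_i)=0$; at an equilibrium the budget is exhausted, so this is satisfied even with $\beta_i = 0$. This degenerate case occurs precisely when the physical constraints, rather than the budget, are what binds the agent---e.g.\ when all of agent $i$'s type constraints are tight and a marginal increase in budget yields no additional utility---which is exactly the regime these added constraints create and which motivates the whole paper.

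The paper handles this by splitting the forward direction into two cases. When $y_i > 0$ it rescales as you do. When $y_i = 0$ it abandons the rescaling entirely and instead constructs the \textbf{BP-SOP} multipliers directly from the structure of the optimal solution established in Theorem~\ref{thm:iopoptimal}: letting $\alpha_i^*$ be the slope of the last \textit{virtual product} purchased, it sets $\lambda_i = \alpha_i^*\bigl(\sum_j u_{ij}x_{ij}\bigr) - w_i$ and $r_{it} = \alpha_i^* u_{ij} - p_j$ for the goods with $x_{ij}>0$, then verifies stationarity, complementary slackness, dual feasibility ($r_{it}\geq 0$ because $\alpha_i^*$ is the maximal slope), and the identity $\lambda_i = \sum_t r_{it}$ by hand. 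Your proposal is missing this construction, and without it the claimed bijection is only established on the subset of equilibria with strictly positive budget multipliers. To repair your argument you would need either to add this second case or to impose and justify a condition guaranteeing $\beta_i>0$ for every agent (for instance the unconstrained-good hypothesis of Theorem~\ref{thm:market-eq}), which the statement of Theorem~\ref{thm:thm2} does not assume.
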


The above theorem states in one direction that the market clearing KKT conditions of \textbf{BP-SOP} are equivalent to that of the \textbf{IOP} if for each agent $i$, the budget perturbation constant $\lambda_i = \sum_{t = 1}^{l_i} r_{it}$. Furthermore, the above theorem also states the converse, i.e., any equilibrium price in the market for the social optimization problem \textbf{BP-SOP} must correspond to a \textit{fixed point}, i.e., $\lambda_i = \sum_{t = 1}^{l_i} r_{it}$. We now present a proof sketch of Theorem~\ref{thm:thm2} and show the complete derivation in Appendix section~\ref{proof-thm2}.

\begin{hproof}
We first derive the necessary and sufficient first order KKT conditions for the \textbf{BP-SOP} and \textbf{IOP}.
The forward direction of our claim follows from considering a market equilibrium of the \textbf{IOP} and using this to show that $\lambda_i = \sum_{t = 1}^{l_i} r_{it}$, $\forall i$ is the \textit{fixed point} of \textbf{BP-SOP}. For the converse, we can show that if we set $\lambda_i = \sum_{t = 1}^{l_i} r_{it}$, $\forall i$, then each agent completely uses up their budget, while all the goods are sold to capacity.
%For the converse, we observe from the KKT conditions that for any agent to completely use up their budget at any equilibrium price it must follow that $\lambda_i = \sum_{t = 1}^{l_i} r_{it}$, $\forall i$. This establishes the one-to-one correspondence between the equilibrium price vector $\mathbf{p}$ and a fixed point solution of \textbf{BP-SOP}.
\end{hproof}

%In fact, we also show that the converse to the above theorem is true, i.e., any equilibrium prices in the market for \textbf{BP-SOP} will result in a \textit{fixed point} for our problem.

The above result implies that the prices a social planner would set through the solution of \textbf{BP-SOP} would guarantee a market clearing outcome. However, we note that for this mechanism to be implementable we need to determine the exact budget perturbation parameters which depend on the dual variables of the capacity constraint of \textbf{BP-SOP}, which we do not have knowledge of. This issue is addressed through a fixed point iterative procedure in section~\ref{FixedPoint}.

\subsection{Economic Relevance of Solution of \textbf{BP-SOP}} \label{InterpSocOpt}

Having established a one-to-one correspondence between the equilibrium price vector and the \textit{fixed point} solution of \textbf{BP-SOP}, we now show the economic relevance of the resulting allocations under the appropriately chosen budget perturbation constants. We first observe that due to the KKT equivalence of \textbf{BP-SOP} and \textbf{IOP} at the equilibrium price and the corresponding \textit{fixed point}, we have that each agent obtains their most preferred bundle of goods under the set prices. 

We now interpret the dual variable $r_{it}$ of the physical constraint as the price that agent $i$ must pay to purchase good type $t$. Hence, the total price that a buyer must pay to purchase goods $j$ belonging to type $t$ is $\sum_{j \in t} p_j x_{ij} + r_{it}$; however, the buyer only observes the price $p_j$ for good $j$ in the $\textbf{IOP}$. Thus, to reconcile the price difference the buyer observes and that in $\textbf{BP-SOP}$, we need to pay the additional price $\sum_{t} r_{it}$ for buying goods in the different \textit{types} by augmenting agents' budgets. Further, buyers are no longer purchasing goods with the highest \textit{bang-per-buck}, and under the adjusted price set $p_j' = p_j + r_{it}$, where one unit of good $j$ is purchased and $j \in t$, agents are purchasing goods with the highest "adjusted" \textit{bang-per-buck}. Finally, we observe that more constrained agents have larger weights $\lambda_i$ than less constrained agents, ensuring more constrained agents have "higher priorities" and thus an allocation that lies within their feasible constraint set. %If the agents had no additional physical constraints then $\sum_{t = 1}^{l_i} r_{it} = 0$, which will imply that $\lambda_i = 0$.

%We can further interpret the $\lambda_i$ parameters as weight adjustments for the budgets to account for agents with constraints. In general more constrained agents will require larger weights $\lambda_i$ than less constrained agents to ensure that more constrained agents have "higher priorities" and thus an allocation that lies within their feasible constraint set. We also note that if the agents had no additional physical constraints then $\sum_{t = 1}^{l_i} r_{it} = 0$, which will imply that $\lambda_i = 0$.

%Finally, we note that in real world applications, agents will be grouped based on different customer classes, as will be explained in section~\ref{Applications}, and the constraints of agents along with their utilities will be learnt through buying behaviour of agents within each customer class. Thus no single agent can benefit through deviations by misreporting constraints and so we have that our mechanism ensures an appropriate budget adjustment that guarantees the formation of a market equilibrium.

\section{Fixed Point Scheme to determine Perturbation Constants} \label{FixedPoint}

In the above budget perturbed social optimization formulation, we required that $\lambda_i = \sum_{t = 1}^{l_i} r_{it}$, i.e., $\lambda_i$ depends on the dual variables of the problem, which we have no knowledge of apriori. %Thus, when solving our problem, we will not be able to start out with the correct value of $\lambda_i$ that satisfies the above expression relating $\lambda_i$ to the dual variables. 
In this section, we show how to compute the appropriate value of $\lambda_i$ through the means of a fixed point iteration in section~\ref{algorithm-fixed-pt} and numerically establish its convergence through experiments in section~\ref{experiments}. 

\subsection{Fixed Point Iteration Algorithm} \label{algorithm-fixed-pt}

To determine the true value of the perturbation parameters specified by the vector $\lambda \in \mathbb{R}_{\geq 0}^{n}$, we consider an iterative scheme of the form $G\left(\lambda_1^{(k)}, ..., \lambda_n^{(k)}\right) = \left(\mathbf{r}_{1}^{(k)}, ..., \mathbf{r}_n^{(k)}\right)$, where we update our perturbation parameters as: $\left(\lambda_1^{(k+1)}, ..., \lambda_n^{(k+1)}\right) = \left(\sum_{t = 1}^{l_i} r_{1t}^{(k)}, ..., \sum_{t = 1}^{l_i} r_{nt}^{(k)} \right)$. Here $G$ is a function that takes in the $k^{th}$ iterate $\lambda_i^{(k)}$ for all agents $i$, solves the corresponding social optimization problem \textbf{BP-SOP} and returns the dual variables, $\mathbf{r}_i^{(k)} \in \mathbb{R}_{\geq 0}^{l_i}$, corresponding to the physical constraints.

The following algorithm depicts the fixed point iterative scheme, where, $\lambda = (\lambda_1, ..., \lambda_{n})$, and $\mathbf{R} = (\mathbf{r}_1, ..., \mathbf{r}_n)$, the dual variables corresponding to the physical constraints in \textbf{BP-SOP}.
%\vspace{-10pt}
\begin{algorithm} 
\label{alg:Algo1}
\SetAlgoLined
\SetKwInOut{Input}{Input}\SetKwInOut{Output}{Output}
\Input{Tolerance $\epsilon$, Function $G(\cdot)$ to calculate dual variables of physical constraints of \textbf{BP-SOP}}
\Output{Budget Perturbation Parameters $\lambda$}
%\Output{}
$\lambda \leftarrow \mathbf{0} $ \;
$\mathbf{R} \leftarrow G(\lambda)$ \;
$q_i \leftarrow \sum_{t = 1}^{l_i} r_{it}$, $\forall i$ \;
 \While{$\norm{\lambda - \mathbf{q}}_2 > \epsilon$}{
  $\lambda_i \leftarrow \sum_{t = 1}^{l_i} r_{it}$, $\forall i$ \;
  $\mathbf{R} \leftarrow G(\lambda)$ \;
  $q_i \leftarrow \sum_{t = 1}^{l_i} r_{it}$, $\forall i$ \;}
\caption{Fixed Point Scheme}
\end{algorithm}
%\vspace{-20pt}
\subsection{Numerical Experiments with Iterative Scheme} \label{experiments}

We now numerically evaluate the convergence of the iterative scheme for \textbf{BP-SOP} through the implementation of Algorithm 1 to the allocate agents to public spaces.

We consider a neighborhood with a population of $n=200$ people and $m = 6$ public spaces, with three resource \textit{types} including two grocery stores, two parks, and two beaches. The capacities of the public spaces are $\Bar{s}_j = 100$, $\forall j \in [m]$ and the physical constraint used in this numerical experiment is that each individual would not want to go to more than one of the public spaces within the same resource \textit{type} over the course of a day, defining three identical physical constraints for each person. 

Further, we consider an economy where each person $i$ is endowed with a budget $w_i$ through electronic coupons that they can spend over a time horizon of a day to use these public spaces. In this experiment, we endow agents with random budgets, although these could be assigned differently including giving each person equal budgets. Finally, the people have different values for the public spaces, and their preferences are captured through their utilities for availing each of these public spaces. This can depend on a range of factors including their proximity to the public space and quality of service of a given public space within a certain \textit{type} relative to other public spaces within the same \textit{type}. For the purposes of this analysis, the utilities $u_{ij}$, $\forall i, j$, are randomly generated.

On the above defined problem instance, we run Algorithm 1, wherein we terminate when $\norm{\lambda^{(k)} - \sum_{t = 1}^{3} \mathbf{r}_t^{(k)}} \leq \epsilon$, where $\lambda^{(k)} = \left(\lambda_1^{(k)}, ..., \lambda_{n}^{(k)}\right)$ and $\mathbf{r}_t^{(k)} = \left(r_{1t}^{(k)}, ..., r_{nt}^{(k)}\right)$, where $r_{it}^{(k)}$ is the dual variable of the optimization problem at iteration $k$, and $n$ is the total number of people, which is 200 for our problem instance. The experiment confirmed that for the above public goods allocation problem (as well as for other tested problem instances) that the iterative scheme converges in fewer than 40 iterations to a fixed point as can be observed in Figure~\ref{BudgetUtilityPerturbed}, highlighting the computational feasibility of our mechanism for large problem instances. Furthermore, we did not require any structure on the budgets and utilities, as these were randomly chosen, to observe convergence.

\begin{figure}[!h]
      \centering
      \includegraphics[width=0.7\linewidth]{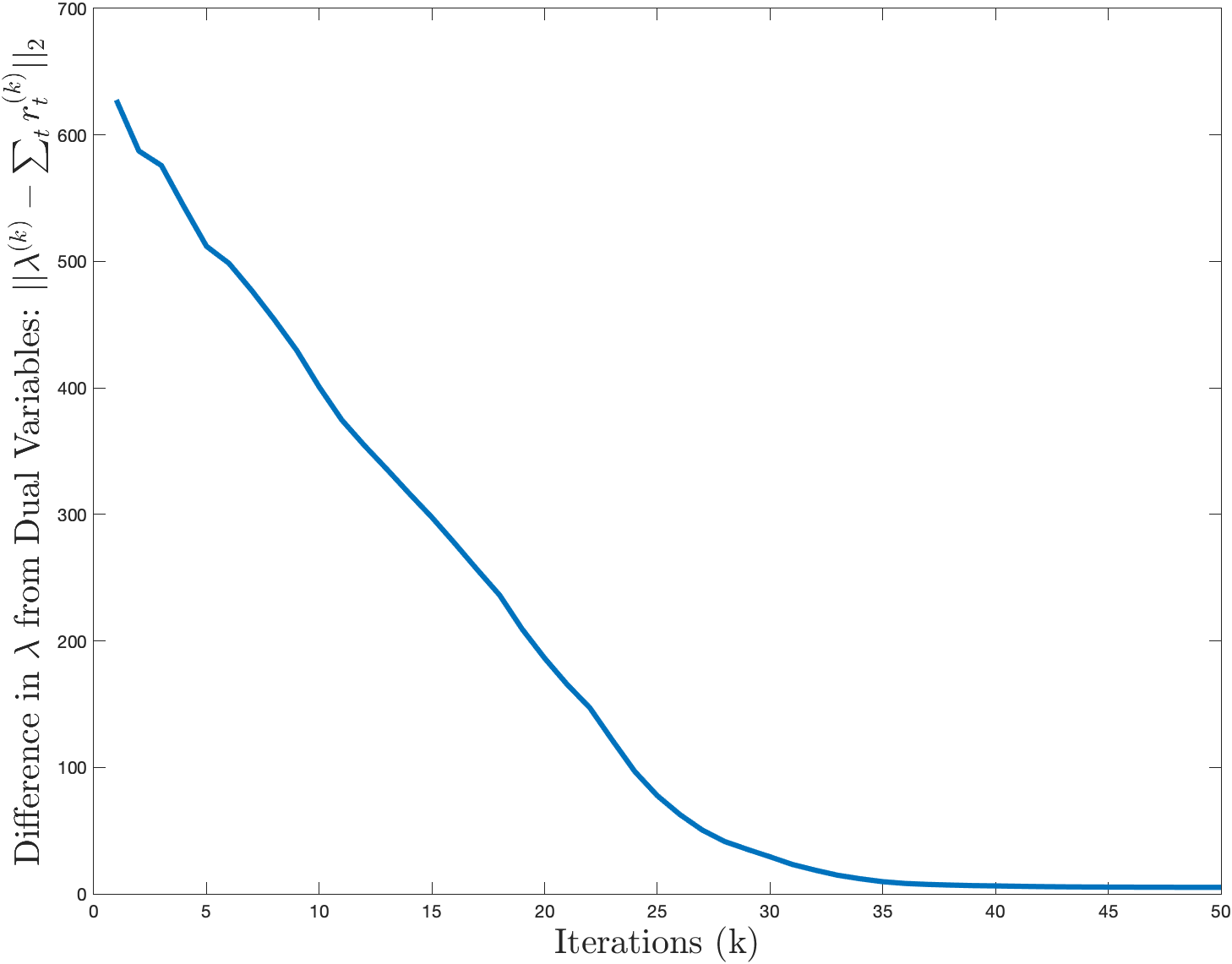}
      \caption{Numerical Convergence of fixed point iterative scheme with 200 people, 6 goods and 3 \textit{types} of goods. The budgets and utilities were assigned randomly to agents, and the convergence is observed after about 40 iterations.}
      \label{BudgetUtilityPerturbed}
   \end{figure}
%\vspace{-10pt}
We note that the numerical experiments confirmed the feasibility of the allocations with each good being used to capacity. In particular, for each agent the sum of their allocations within each resource \textit{type}, i.e., grocery stores, parks and beaches, is exactly one, indicating feasibility with respect to the added physical constraints. We note that in the absence of the physical constraint, if we used the pure Fisher market framework then an individual could have received a highly undesirable allocation. For instance, the pure Fisher market framework without these physical constraints may allocate a certain agent to only grocery stores, if the agent had the highest utility to price ratio for grocery stores as compared to other public resources. Our mechanism overcomes this problem by accounting for these physical constraints whilst guaranteeing a socially beneficial allocation.

%We further note that we did not require any structure on the budgets and utilities when performing our experiments and observed a convergence of the iterative scheme. In particular, even if the utilities or budgets are randomly chosen as is the case in our public goods allocation example, we obtain that the difference in the norm of the $\lambda$ and the corresponding dual variables decreases with the number of iterations as can be observed in Figure~\ref{BudgetUtilityPerturbed}. As a result, our numerical experiments indicate that the convergence of the procedure is independent of the data, such as utilities or budgets of the agents in our model. 
%The numerical study also showed that the dual variables settled near the same values irrespective of the initialization of $\lambda$ indicating the possibility of a unique fixed point.

\section{Implementation of Mechanism} \label{Applications}

In this section, we demonstrate how our proposed budget adjustment mechanism could be implemented in the real world. In particular, we elucidate the details of the implementation of our mechanism to efficiently allocate "time of use" permits for a public space, such as a beach. Our public goods are the different blocks of time that people can use the beach with the physical constraint that people can use the beach at most once per day, i.e., the resource \textit{types} are the days an agent is interested in using the beach. We first highlight the broad implementation plan of our mechanism in setting prices to allocate "time of use" permits to people using a beach:
\begin{enumerate}
    \item As a first step, we will need to create a schedule for the usage of the beach, i.e., we need to divide the day into different blocks of time, each of which corresponds to a public good.
    \item We will set non-monetary prices for the different times of use through our market mechanism.
    \item Finally, the budgets of individuals can be allocated through electronic coupons, which can then be used to purchase "time of use" permits priced in non-monetary units.
\end{enumerate}

However, in real world implementation, there is an additional hurdle in setting appropriate prices, as the market does not have information on the utilities of buyers, which serve as inputs to \textbf{BP-SOP}. Thus, we propose the following methodology to set market clearing prices using our mechanism while learning information on the utilities of buyers over time:

\begin{enumerate}
    \item We classify people into different groups based on parameters such as their income level and demographic information, so that we can assign all members within a group the same utility coefficients. This helps in reducing the computational complexity of the problem.
    \item We initialize a price for the different "times of use" by running our market mechanism with some prior on the utilities of the agents, which can be obtained through data on prior customer demands or just be assigned uniformly amongst all agent groups for all goods.
    \item Based on the set prices, people will purchase "time of use" permits. We will observe the "buying-behavior" of these different groups of individuals through a comparison between the demand of people at the set prices and the supply of these goods, i.e., the good capacities. 
    \item This observed discrepancy between the capacities of the beach and the corresponding demand can be used to re-calibrate the utility coefficients (e.g., through non-parametric statistical estimation of observed demands) and adjust prices in the market.
    \item We can continue this utility and price adjustment process iteratively until we have that the capacities of the beach at each "time of use" closely matches the demand of the good.
\end{enumerate}

Through this adaptive learning process, we will be able to learn each agent's utility coefficients and set appropriate prices (evaluated through the dual variables of our proposed Fisher market mechanism) to achieve the desired market clearing outcome. In particular, employing "time of use" permits (at three hour intervals) that are priced according to our mechanism will help achieve an intermediate outcome between the two extremes of an overused and an underused beach presented in Figure~\ref{CurrentScenario}. We depict such an intermediate market outcome in Figure~\ref{TimeOfUse}.
\vspace{-15pt}
\begin{figure}[!h]
      \centering
      \includegraphics[width=0.9\linewidth]{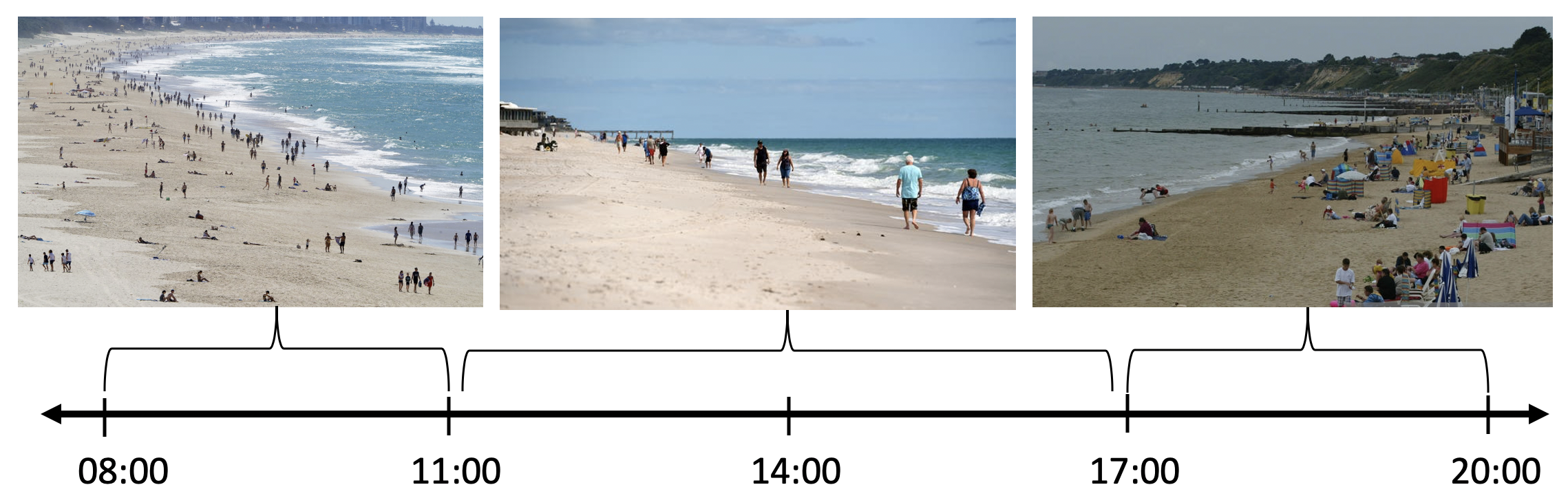}
      \caption{Our market outcome that will achieve an intermediate between an overcrowded and an underused beach. Through our pricing mechanism, customers will purchase time of use permits (in this case at 3 hour intervals) in a “controlled” manner resulting in neither overcrowded or underused public resources.}
      \label{TimeOfUse}
   \end{figure}
\vspace{-20pt}
\section{Conclusions and Future Work} \label{Conclusion}

In this work, we have developed market based mechanisms to more efficiently allocate capacity constrained public goods that are priced in non-monetary units. We defined a new individual optimization problem \textbf{IOP} in the presence of physical constraints and established market equilibrium properties of this problem, including existence, non-uniqueness and thoroughly characterized its optimal solution. Even though the properties of \textbf{IOP} are fundamentally different from that of the individual optimization problem in Fisher markets, we proposed a mechanism to derive a market equilibrium in the presence of physical constraints, thereby generalizing the Fisher market framework. In particular, we reformulated the Fisher market setup to account for additional physical constraints by perturbing the budgets of agents and defining a new social optimization problem \textbf{BP-SOP}. We then showed a one-to-one correspondence between the equilibrium price vector and a \textit{fixed point} solution of \textbf{BP-SOP} through the verification of KKT conditions. Next, we established the significance of the budget perturbation constants and that under the appropriate choice of these constants, a market equilibrium is attained such that under the set prices each agent's individual utilities are maximized. To obtain the right budget perturbation parameters, we used a fixed point iteration scheme for the reformulated social optimization problem and numerically showed the convergence of this iterative procedure indicating the applicability of our mechanism for real world problem instances.

There are various interesting directions of future research that warrant more study. First, the allocations provided by our market mechanism are fractional, and as we would like to make discrete allocations, it would be interesting to investigate the loss in social efficiency under integral constraints. Furthermore, while we have numerically shown convergence guarantees of our iterative scheme, it would be beneficial to theoretically understand the convergence to the fixed point as well as the rate of convergence of our procedure. Next, we believe that a stronger characterization of the computational complexity of this problem would provide a more nuanced appreciation of whether computing the exact fixed point is feasible in polynomial time. Finally, an interesting area of research is generalizing this framework to an online setting in which customers arrive in the market platform sequentially and an irrevocable decision needs to be made about the prices that should be set in the market while still achieving a socially efficient allocation.

\section*{Acknowledgements}

This work is partly supported by the Research Grant Council of Hong Kong (GRF Project no. 16215717 and 16243516), National Science Foundation (NSF) under CAREER Award CMMI1454737 and Toyota Research Institute (TRI). This article solely reflects the opinions of its authors and not NSF, TRI, or any other entity.

\bibliographystyle{unsrt}
\bibliography{main}

\clearpage

\section{APPENDIX}

\subsection{Proof of Proposition 1} \label{non-existence-prop1}

We now restate and proceed to prove Proposition~\ref{prop:non-existence-eq}.

\nonexistenceeq*

\begin{proof}
Consider a market with two buyers and two goods, with capacities $\Bar{s}_1 = 1.5$ and $\Bar{s}_2 = 0.5$, and utilities and budgets as specified in Table 1. 

\begin{table}[tbh]
\caption{Utilities and Budgets of two Buyers in a two buyer, two good market}
\centering
\begin{tabular}{|c|c|c|c|}
\hline
                 & \textbf{Utility for Good 1} & \textbf{Utility for Good 2} & \textbf{Budget} \\ \hline
\textbf{Buyer 1} & 200                         & 0.1                         & 15              \\ \hline
\textbf{Buyer 2} & 100                         & 1.1                         & 5               \\ \hline
\end{tabular}
\end{table}
Denoting $x_{ij}$ as the amount of good $j$ that buyer $i$ purchases, we consider the following physical constraints: $x_{11} + x_{12} \leq 1$ and $x_{21} + x_{22} \leq 1$ for buyer's 1 and 2 respectively.

To show that no equilibrium exists in this market, we consider the following two cases:

\begin{enumerate}
    \item $p_1 > 10$: From the physical constraint, buyer 1 can buy at most one unit of good 1 and buyer 2 cannot afford 0.5 units of good 1 and so it follows that $\sum_{i = 1}^{2} x_{ij}(p_1) < \Bar{s}_1 = 1.5$, establishing that good 1 cannot be cleared. Thus, $p_1>10$ cannot be a market clearing price.
    \item $p_1 \leq 10$: Since the utilities of both buyers is higher for good 1 than for good 2, it must be that the price for good 1 must be higher than the price for good 2. However, this implies that $p_2 \leq p_1 \leq 10$, which implies that buyer 1 cannot use up their budget, since the buyer can purchase at most one unit of both goods combined. Thus, $p_1 \leq 10$ cannot be a market clearing price.
\end{enumerate}

The two cases above establish that for this example a market equilibrium fails to exist in the presence of additional physical constraints not considered in the Fisher market framework.
\end{proof}

\subsection{Proof of Theorem 1} \label{existence-market-eq}

We now restate and proceed to prove Theorem~\ref{thm:market-eq}.

\marketeq*

\begin{proof}
We start by normalizing the capacities of each good to $1$, i.e., $\Bar{s}_j = 1$ $\forall j$, and normalizing the total budget of all agents to 1, i.e., $\sum_{i = 1}^{n} w_i = 1$. 

Next, we define an excess demand function $f_j(\mathbf{p}^*) = \sum_{i = 1}^{n} x_{ij}(p_j^*) - 1$ and use Sperner's lemma to prove the existence of an equilibrium price vector $\mathbf{p}^* = (p_1^*, p_2^*, ..., p_m^*) \in \Delta_m$, where $\Delta_m$ is the standard simplex, such that for all goods $j$, the excess demand function $f_j(\mathbf{p}^*) = \sum_{i = 1}^{n} x_{ij}(p_j^*) - 1 = 0$.

Next, for any $\mathbf{p} \in \Delta_m$, we define a coloring function $c: \mathbf{p} \mapsto \small\{1, ..., m\small\}$, such that $c(\mathbf{p}) = j$ if $f_j(\mathbf{p}) \leq 0$ and $p_j \neq 0$. We claim that such a coloring satisfies Sperner's lemma.

To see this, we first note that all the $m$ corner points of the simplex must be colored with different colors, as at each of the corner points $k$ exactly one entry $p_k = 1 \neq 0$. In fact when $p_k = 1$, as the budgets are normalized to $1$, it must be that $f_k(\mathbf{p}) \leq 0$, as at most one unit of good $k$ can be purchased by all agents collectively. Thus, we have that all the corner points of the simplex must be colored with a different color. For any other bounding vertex there must exist $j$, such that $p_j \neq 0$ and $f_j(\mathbf{p}) \leq 0$. This is because if for all $j$, such that $p_j \neq 0$, we have that $f_j(\mathbf{p})>0$ then $\sum_{i = 1}^{n} \sum_{j = 1}^{m} p_j x_{ij} >1 = \sum_{i = 1}^{n} w_i$, a contradiction. Thus, we have a valid coloring, which satisfies the condition of Sperner's lemma.

%Using a similar reasoning, we observe that all the edges of the standard simplex are such that there are only two goods $k_1$ and $k_2$ with non-zero prices and both these goods correspond to colors at the endpoints corresponding to that edge. Now, as the budgets sum to 1 it must be that for one of the two goods $k \in \small\{k_1, k_2 \small\}$ that $f_k(\mathbf{p}) \leq 0$. Thus, we have a valid coloring, which satisfies the condition of Sperner's lemma.

Next, by Sperner's lemma we have that there must exist a base simplex such that each of its corner points have distinct colors. Taking finer and finer triangulations and using that the demand function is continuous, we can find a $\mathbf{p}^*$, such that $f_j(\mathbf{p}^*) \leq 0$, $\forall j$, as each subsequent triangulation still lies on the standard simplex. Thus, we have shown $\forall j$ that $\sum_{i = 1}^{n} x_{ij}(p_j^*) \leq 1$.

Now to prove that the above inequality is in fact an equality, we proceed by contradiction. In particular, we suppose that if $\exists j$ with $p_j>0$, such that $\sum_{i = 1}^{n} x_{ij}(p_j^*) < 1$, then clearly there must exist some buyer $i$ whose budget was not used up since $\sum_{i = 1}^{n} \sum_{j = 1}^{m} p_j x_{ij} < 1 = \sum_{i = 1}^{n} w_i$. However, by the condition that for all agents $i$ there exists a good $j$ without physical constraints, $i$ can buy more units of good $j$ (and gain a strictly positive utility), giving us our desired contradiction. Thus, $\mathbf{p}^*$ is an equilibrium price vector.

Next, suppose that $\exists j$ with $p_j = 0$, such that $\sum_{i = 1}^{n} x_{ij}(p_j^*) < 1$ and assume that $j \in t$, i.e., good $j$ belongs to good \textit{type} $t$. Then we must have that for all agents $i$, the physical constraint $\sum_{j \in t} x_{ij} = 1$, i.e., it is tight, as otherwise $i$ can buy more of good $j$ at $p_j = 0$ to increase his/her utility. We assume that \textit{type} $t$ has $k$ products and the physical constraints for type $t$ are such that $x_{ij} + \sum_{h \in t, h \neq j} x_{ih} \leq t_i$ for all agents $i$, where $t_i$ must be such that $\sum_{i = 1}^{n} t_i \geq k$ to to ensure a feasible solution for the problem. This is because otherwise there must exist some product that cannot be sold out. Now, since physical constraints for all agents for good \textit{type} $t$ are tight, we have that $\sum_{i = 1}^{n} \left( x_{ij} + \sum_{h \in t, h \neq j} x_{ih} \right) = \sum_{i = 1}^{n} t_i \geq k$. However, since \textit{type} $t$ has $k$ products it must be that $\sum_{i = 1}^{n} \left( x_{ij} + \sum_{h \in t, h \neq j} x_{ih} \right) = k$, which implies that the product is sold to capacity. Since we normalized the capacities of all products to one unit, it follows that $\sum_{i = 1}^{n} x_{ij}(p_j^*) = 1$, a contradiction. This again establishes that $\mathbf{p}^*$ is an equilibrium price vector.

%each good has only one unit, it must be that 

Having considered both cases of $p_j = 0$ and $p_j>0$, we have established that for all $j$ that $\sum_{i = 1}^{n} x_{ij}(p_j^*) = 1$, with each agent's budget being completely used up, establishing that $\mathbf{p}^*$ is the equilibrium price vector, and so a market equilibrium exists.
\end{proof}

\subsection{Proof of Proposition~\ref{prop:non-uniqueness-eq}} \label{non-uniqueness-prop2}

We now restate and proceed to prove Proposition~\ref{prop:non-uniqueness-eq}.

\nonuniquenesseq*

\begin{proof}
To establish that the market equilibrium may not be unique, we consider the example of three buyers and three goods, with capacities $\Bar{s}_1 = 1, \Bar{s}_2 = 2$ and $\Bar{s}_3 = 1$ respectively and utilities and budgets as specified in Table 2. Next, denoting $x_{ij}$ as the amount of good $j$ that buyer $i$ purchases, we consider the following physical constraints: $x_{11} + x_{12} \leq 1$, $x_{21} + x_{22} \leq 1$ and $x_{31}+x_{32} \leq 1$ for buyer's 1, 2 and 3 respectively, with buyers 1, 2 and 3 being able to purchase any amount of good 3.

\begin{table}[tbh] \label{table-non-unique}
\caption{Utilities and Budgets of three Buyers in a three buyer, three good market}
\centering
\begin{tabular}{|c|c|c|c|c|}
\hline
                 & \textbf{Utility for Good 1} & \textbf{Utility for Good 2} & \textbf{Utility for Good 3} & \textbf{Budget} \\ \hline
\textbf{Buyer 1} & 100                         & 1  & 2                       & 20              \\ \hline
\textbf{Buyer 2} & 1                         & 100    & 1                     & 10               \\ \hline
\textbf{Buyer 3} & 1                         & 100     & 1                    & 10               \\ \hline
\end{tabular}
\end{table}

We observe that if we set prices $\mathbf{p} = (11, 10, 9)$ or $\mathbf{p} = (10, 10, 10)$ that the resulting allocation will be $x_{11} = 1$, $x_{13} = 1$, $x_{22} = 1$ and $x_{32} = 1$. We note that under these prices and allocations, each agent's budgets are completely used up and each of the goods are sold to capacity, establishing that both these price vectors are equilibrium price vectors. Thus, we have shown that the equilibrium market price may not necessarily be unique.
\end{proof}

\subsection{Proof of Theorem~\ref{thm:iopoptimal}} \label{iop-optimal-pf}

We now restate and proceed to prove Theorem~\ref{thm:iopoptimal}.

\iopopt*

\begin{proof}
From Definition~\ref{def-sol-set}, we know for all resource \textit{types} $t$ that $u_t = \sum_{j \in t} u_{ij} x_{ij}^*, w_t = \sum_{j \in t} x_{ij}^* p_j$.

Since $\mathbf{x}_i^*$ is an optimal solution, it must be that $(\mathbf{x}_i^*)_{j \in t}$, i.e., the restriction of $\mathbf{x}_i^*$ to the goods in resource \textit{type} $t$, is an optimal solution to the following problem:

\begin{maxi!}|s|[2]                   % mini! = minimize 
    {\left\{x_{ij} \right\}_{j \in t}}                               % optimization variable
    {\sum_{j \in t} u_{ij} x_{ij}  \label{eq:neweq1}}   % objective function and label
    {\label{eq:newExample001}}             % label for optimizatio problem
    {}                                % optimization result
    \addConstraint{\sum_{j \in t} x_{ij} p_j}{\leq w_t \label{eq:newcon1}}    % constraint 1
    \addConstraint{\sum_{j \in t} x_{ij}}{\leq 1, \label{eq:newcon2}}
    \addConstraint{x_{ij}}{\geq 0, \forall j \in t  \label{eq:newcon3}}  % constraint 2
\end{maxi!}

\end{proof}

The optimal solution $\left\{x_{ij}^* \right\}_{j \in t}$ corresponds to some point $C$ (as depicted in Figure \ref{virtual-products}) in the lower frontier of the convex hull $S_t$ from $(0, 0)$ to $(u_{ij_{\max}}, p_{j_{\max}})$, where $j_{\max} = \argmax_{j \in t} u_{ij}$. Thus, such a solution $\left\{x_{ij}^* \right\}_{j \in t}$ can be viewed as purchasing \textit{virtual products} given by the different slopes $\theta$ in the descending order of their \textit{bang-per-buck}, which is equivalent to the ascending order of the slopes $\theta$. In particular, one needs to purchase \textit{virtual products} in the descending order from the origin to the point $C$ (as depicted in Figure \ref{virtual-products}).

Having established that agent $i$ purchases goods in the descending order of their \textit{bang-per-buck} within each good \textit{type}, we now show that agent $i$ in fact purchases goods in the descending order of their \textit{bang-per-buck} amongst all good \textit{types}. Thus, consider two different good types, $t$, $t'$ and suppose for contradiction that agent $i$ does not purchase goods in the descending order of their \textit{bang-per-buck}. Thus, it must be that $\exists$ $\theta_a \in t$ and $\theta_b \in t'$, such that $\theta_a <\theta_b$, where some amount of $\theta_b$ is bought but $\theta_a$ is not completely purchased, i.e., strictly less than one unit of it is bought.

Without loss of generality, assume that $\theta_a$ is defined by two goods $j_1, j_2 \in t$ (with $u_{ij_1} < u_{ij_2}$) and $\theta_b$ is defined by two goods $j_3, j_4 \in t'$ (with $u_{ij_3} < u_{ij_4}$), as depicted in Figure~\ref{thm2-proof-opt-fig}. Next, suppose that we have an optimal solution for these four goods given by $(x_{ij_1}^*, x_{ij_2}^*, x_{ij_3}^*, x_{ij_4}^*)$.

\begin{figure}[!h]
      \centering
      \includegraphics[width=0.9\linewidth]{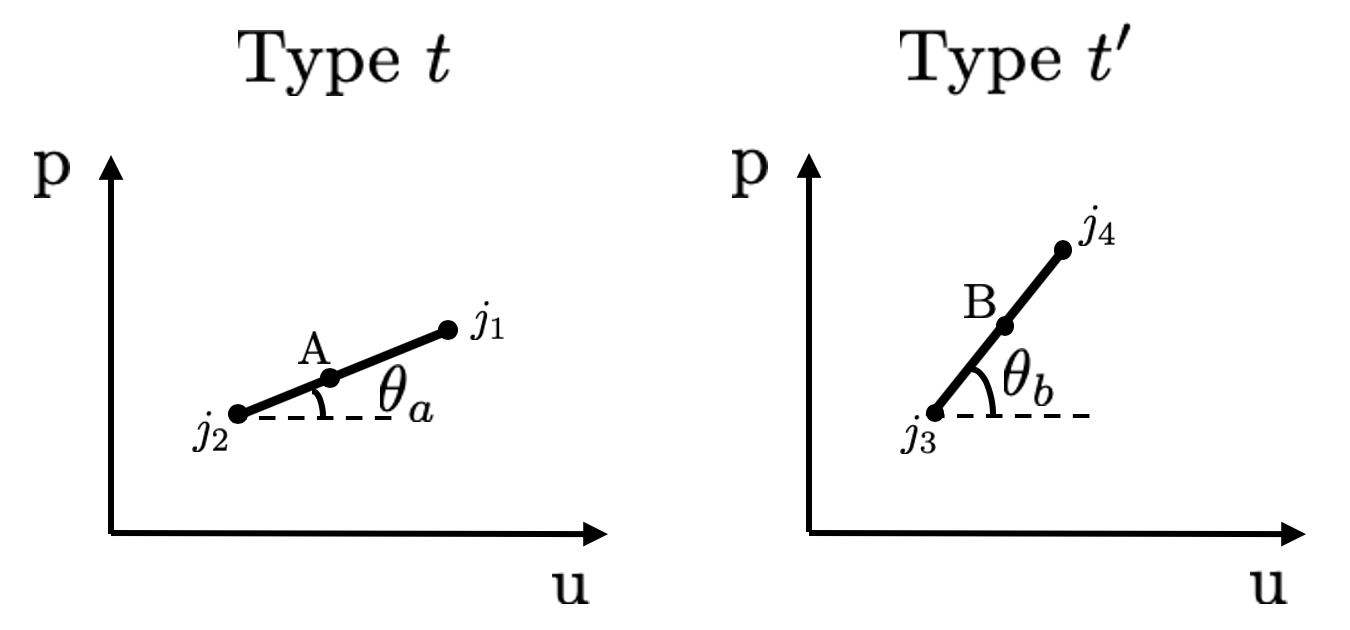}
      \caption{Goods $j_1$ and $j_2$ belong to \textit{type} $t$ and goods $j_3$ and $j_4$ belong to \textit{type} $t'$. The amount of \textit{virtual product} consumed lies somewhere along the line segment denoted by $A$ and $B$ for resource \textit{types} $t$ and $t'$ respectively. We further have that $\theta_a<\theta_b$.}
      \label{thm2-proof-opt-fig}
   \end{figure}

Next, let $x_{ij_4}^{'} = x_{ij_4}^* - \epsilon$ and $x_{ij_3}^{'} = x_{ij_3}^* + \epsilon$. We now observe that since $(x_{ij_3}^*, x_{ij_4}^*)$ is feasible with respect to the physical constraint, so is $(x_{ij_3}^{'}, x_{ij_4}^{'})$. Next, we observe that as we shift from the allocation $(x_{ij_3}^{*}, x_{ij_4}^{*})$ to $(x_{ij_3}^{'}, x_{ij_4}^{'})$, we "save" money given by $\epsilon (p_{j_4} - p_{j_3})$, but also reduce utility by $\epsilon (u_{ij_4} - u_{ij_3})$. But, now, we can use this "saved" money to purchase $\theta_a$.

So consider, $(x_{ij_1}^{'}, x_{ij_2}^{'})$, such that $x_{ij_2}^{'} = x_{ij_2}^* - \tau$ and $x_{ij_1}^{'} = x_{ij_1}^* + \tau$, where $\tau$ satisfies $\tau (p_{j_2} - p_{j_1}) = \epsilon (p_{j_4} - p_{j_3})$, i.e., we use our unspent or "saved" money. As a result we gain a utility given by $\tau (u_{ij_2} - u_{ij_1})$. Then we have that the change in the utility of the agents is given by:

\[u_i^{'} - u_i^* = \tau (u_{ij_2} - u_{ij_1}) - \epsilon (u_{ij_4} - u_{ij_3}) = \epsilon (u_{ij_4} - u_{ij_3}) \cdot \left( \frac{\tau}{\epsilon} \cdot \frac{u_{ij_2} - u_{ij_1}}{u_{ij_4} - u_{ij_3}} - 1 \right) \]
Here, we have taken $u_i^{'}$ as the utility corresponding to the allocation $(x_{ij_1}^{'}, x_{ij_2}^{'}, x_{ij_3}^{'}, x_{ij_4}^{'})$ and $u_i^*$ as the utility corresponding to the allocation $(x_{ij_1}^{*}, x_{ij_2}^{*}, x_{ij_3}^{*}, x_{ij_4}^{*})$. Next, since we have that $\tau (p_{j_2} - p_{j_1}) = \epsilon (p_{j_4} - p_{j_3})$, it follows that:

\[u_i^{'} - u_i^* = \epsilon (u_{ij_4} - u_{ij_3}) \cdot \left( \frac{p_{j_4} - p_{j_3}}{p_{j_2} - p_{j_1}} \cdot \frac{u_{ij_2} - u_{ij_1}}{u_{ij_4} - u_{ij_3}} - 1 \right) = \epsilon (u_{ij_4} - u_{ij_3}) \cdot \left( \frac{\theta_b}{\theta_a} - 1 \right) >0 \]

This establishes that $(x_{ij_1}^{*}, x_{ij_2}^{*}, x_{ij_3}^{*}, x_{ij_4}^{*})$ cannot be optimal and so we have a contradiction. Thus, our claim follows.

\subsection{Proof of Corollary~\ref{cor:coroptsol}} \label{cor-optimal-pf}

We now restate and proceed to prove Corollary~\ref{cor:coroptsol}.

\coroptsol*

\begin{proof}
An agent $i$ purchases \textit{virtual products} until their budget is completely used up or all the \textit{virtual products} are bought, which implies that it is just the last \textit{virtual product} that the agent purchases, which may be bought for less than one unit. As the remaining \textit{virtual products} are all purchased completely, these correspond to endpoints of the line segments of the lower frontier of the convex hull of the solution set $S_t$. This indicates that agent $i$ either does not buy any good (if we are at the origin in Figure~\ref{virtual-products}) or buys one unit of one good only corresponding to an end point of one of the line segments characterizing the lower frontier of the convex hull. Since the last \textit{virtual product} that the agent purchases may be bought for less than one unit, this corresponds any point on the lower frontier of the convex hull. In particular, if such a point lies in the middle of a line segment on the lower frontier of the convex hull then we have that the agent buys a corresponding fraction of two goods (associated with the line segment) within that resource \textit{type}.
\end{proof}

\subsection{Remark on Optimal Solution of \textbf{IOP}} \label{rmk-iopopt}

The optimal solution of the \textbf{IOP} can also be characterized for goods that do not have any physical constraints. For such a product $j$ without physical constraints, we can define the value $\theta_j = \frac{p_j}{u_{ij}}$ and augment this to the list of the $\theta$ values for each of the \textit{virtual products}. As we ordered the \textit{virtual products} in the descending order of their \textit{bang-per-buck} $ = \frac{1}{\theta_{j_1j_2}}$ for \textit{virtual products} corresponding to goods $j_1$ and $j_2$, we can do the same for this new list of $\theta$ values that includes $\theta_j$.

Now buyer $i$ will purchase goods in the ascending order of the $\theta$ values that includes $\theta_j$ analogous to the result in Theorem~\ref{thm:iopoptimal}. As in Corollary~\ref{cor:coroptsol}, at most one unit of each \textit{virtual product} can be purchased; however, any amount of good $j$ (the good without physical constraints) can be purchased. Thus, if agent $i$ still has budget at $\theta_j$, then the remaining budget will be used to purchase product $j$.

\subsection{Examples of Optimal Solution of \textbf{IOP}} \label{examples-iopopt}

We provide two examples to illustrate the use of \textit{virtual products} in characterizing the optimal solution of the \textbf{IOP}. For both examples, we consider the market with $6$ products and an agent $i$. Furthermore, we denote the utilities of the agent for each of the products as: $u_{i1} = 1, u_{i2} = 2, u_{i3} = 3, u_{i4} = 4, u_{i5} = 5, u_{i6} = 6$ and the price that the agent observes in the market as: $p_1 = 0.1, p_2 = 0.4, p_3 = 0.7, p_4 = 1.2, p_5 = 1.7, p_6 = 2.4$. The difference between the two examples we consider for this market will be in terms of the physical constraints and budgets. In particular, the first example we consider will be one wherein each of the goods in the market has physical constraints and the second example will be one in which one of the goods, i.e., good $5$, will not have any physical constraint.

\textbf{Example 1} (All goods have Physical constraints): We endow agent $i$ with a budget of $w_i = 2.4$, and consider the following physical constraint for \textit{type} 1 goods: $x_{i1} + x_{i3} + x_{i5} \leq 1$, and for \textit{type} 2 goods: $x_{i2} + x_{i4} + x_{i6} \leq 1$.

Given the price and the utilities, we can derive the $\theta$ values for the \textit{virtual products} by calculating the slopes of the lower frontier of the convex hull in Figure~\ref{example1-iop-opt}, giving $\theta_1 = 0.1, \theta_2 = 0.2, \theta_3 = 0.3, \theta_4 = 0.4, \theta_5 = 0.5, \theta_6 = 0.6$.

\begin{figure}[!h]
      \centering
      \includegraphics[width=0.9\linewidth]{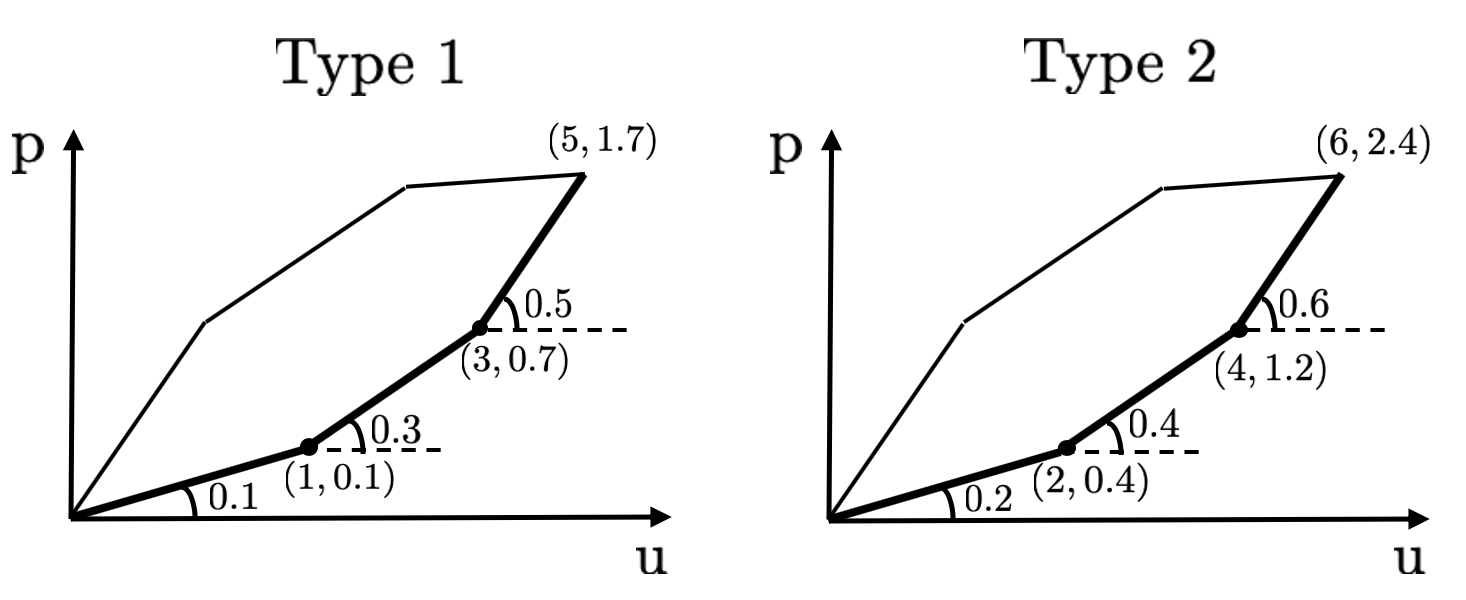}
      \caption{The solution set of the goods in example 1 for both the resource \textit{types} are depicted along with their corresponding $\theta$ values, i.e., the slope of the line segments on the lower frontier of the convex hull (shown in bold).}
      \label{example1-iop-opt}
   \end{figure}

We also note that for a \textit{virtual product} corresponding to points $A = (u_{ij_1}, p_{j_1})$ and $B = (u_{ij_2}, p_{j_2})$, the price is $p_{j_2} - p_{j_1}$ and the utility is $u_{ij_2} - u_{ij_1}$, with slope $\theta_{j_1j_2} = \frac{p_{j_2} - p_{j_1}}{u_{ij_2} - u_{ij_1}}$.

Now, as the agent will buy products in the ascending order of the $\theta$ values until the agent's budget is exhausted by Theorem~\ref{thm:iopoptimal}, we will have that the buyer purchases the following \textit{virtual products}:

\begin{itemize}
    \item 1 unit of $\theta_1$, with a total cost of $0.1$
    \item 1 unit of $\theta_2$, with a total cost of $0.4$
    \item 1 unit of $\theta_3$, with a total cost of $0.6$
    \item 1 unit of $\theta_4$, with a total cost of $0.8$
    \item 0.5 units of $\theta_5$, with a total cost of $0.5$
\end{itemize}

Mapping the purchase of these products to Figure~\ref{example1-iop-opt}, we observe that since both $\theta_2$ and $\theta_4$ are bought 1 unit, we have that this corresponds to the point on the lower frontier of the convex hull corresponding to good 4 and so $x_4^* = 1$. Furthermore, since $0.5$ units of $\theta_5$ are purchased, this maps to a point on the line segment corresponding to good 3 and good 5. Thus, we have that $x_3^* = 0.5$ and $x_5^* = 0.5$. Thus, the final allocations for the agent are: $x_1^* = 0, x_2^* = 0, x_3^* = 0.5, x_4^* = 1, x_5^* = 0.5, x_6^* = 0$.

We also note that these allocations confirm the claim of Corollary~\ref{cor:coroptsol}.

\textbf{Example 2} (Good 5 has no Physical constraints): We endow agent $i$ with a budget of $w_i = 4.5$, and consider the following physical constraint for \textit{type} 1 goods: $x_{i1} + x_{i3} \leq 1$, and for \textit{type} 2 goods: $x_{i2} + x_{i4} + x_{i6} \leq 1$. Good $5$ does not have any physical constraints.

The calculations of the $\theta$ values of all products other than $\theta_5$ remain the same. We calculate $\theta_5$ as mentioned in the remark in Appendix section~\ref{rmk-iopopt}, and so $\theta_5 = \frac{1.7}{5} = 0.34$.

Now, as the agent will buy products in the ascending order of the $\theta$ values until the agent's budget is exhausted by Theorem~\ref{thm:iopoptimal}, we will have that the buyer purchases the following \textit{virtual products}:

\begin{itemize}
    \item 1 unit of $\theta_1$, with a total cost of $0.1$
    \item 1 unit of $\theta_2$, with a total cost of $0.4$
    \item 1 unit of $\theta_3$, with a total cost of $0.6$
    \item 2 units of $\theta_5$, with a total cost of $3.4$
\end{itemize}

The buyer purchases $\theta_5$ before $\theta_4$, as $\theta_5 = 0.34 < 0.4 = \theta_4$. Further, since good 5 does not have any physical constraint, as mentioned in Appendix section~\ref{rmk-iopopt}, the buyer can purchase an arbitrary amount of the good 5 until their budget is exhausted.

We observe that since both $\theta_1$ and $\theta_3$ are bought 1 unit, we have that this corresponds to the point on the lower frontier of the convex hull (for \textit{type} 1 goods) corresponding to good 3 and so $x_3^* = 1$. Next, since one unit of $\theta_2$ is bought and no units of $\theta_4$ are bought, we have that $x_2^* = 1$. Finally, since $2$ units of $\theta_5$ are purchased, as this good does not correspond to any physical constraints, we have $x_5^* = 2$.

Thus, the final allocations for the agent are: $x_1^* = 0, x_2^* = 1, x_3^* = 1, x_4^* = 0, x_5^* = 2, x_6^* = 0$. We note that this allocation is in line with the remark in Appendix section~\ref{rmk-iopopt}.

\subsection{Proof of Theorem~\ref{thm:imposs}} \label{Appendix1}

We now restate and proceed to prove Theorem~\ref{thm:imposs}.

\imposs*

To prove Theorem~\ref{thm:imposs} we need to compare the KKT conditions of the individual and social optimization problems. As in the proof sketch, we only derive the KKT conditions of the social optimization problem \textbf{SOP1} and show that the budgets of agents will in general not be completely used up and thus a market clearing equilibrium cannot hold.

\subsubsection{KKT conditions of Social Optimization Problem \textbf{SOP1}}

We now derive the first order necessary and sufficient KKT conditions for \textbf{SOP1}. To do so, we introduce the dual variables $\mathbf{p} \in \mathbb{R}_{\geq 0}^{m}$ for constraint~\eqref{eq:ImpossSocOpt1}, $\mathbf{r}_{i} \in \mathbb{R}_{\geq 0}^{l_i}, \forall i$ for each of the constraints in~\eqref{eq:ImpossSocOpt2} and $s_{ij} \leq 0, \forall i, j$ for each of the non-negativity constraints~\eqref{eq:ImpossSocOpt3}. Next, we derive the Lagrangian of this problem as:

\begin{align}
    \mathcal{L} = \sum_{i = 1}^{n} w_i \log \left(\sum_{j = 1}^{m} u_{ij} x_{ij} \right) - \sum_{j = 1}^{m} p_j \left(\sum_i x_{ij} - \Bar{s}_j \right) - \sum_{i = 1}^{n} \sum_{t = 1}^{l_i} r_{it} \left(\sum_{j = 1}^{m} A_{tj}^{(i)} x_{ij} - 1 \right) - \sum_{i = 1}^{n} \sum_{j = 1}^{m} s_{ij} x_{ij} 
\end{align}
The first order derivative condition is found by taking the derivative of the Lagrangian with respect to $x_{ij}$:

\begin{align}\label{eq:ImpossLagDerSoc}
    \frac{w_i}{\sum_{j = 1}^{m} u_{ij} x_{ij}} u_{ij} - p_j - \sum_{t = 1}^{l_i} r_{it} A_{tj}^{(i)} \leq 0 
\end{align}
Next, the complimentary slackness condition for this problem can be derived by multiplying~\eqref{eq:ImpossLagDerSoc} by $x_{ij}$:

\begin{align}\label{eq:ImpossCompSlackSoc1}
    \frac{w_i}{\sum_{j = 1}^{m} u_{ij} x_{ij}} u_{ij} x_{ij} - p_j x_{ij} - \sum_{t = 1}^{l_i} r_{it} A_{tj}^{(i)} x_{ij} = 0 
\end{align}
Thus, the KKT conditions of the social optimization problem are given by equations~\eqref{eq:ImpossSocOpt1}-\eqref{eq:ImpossSocOpt3},~\eqref{eq:ImpossLagDerSoc},~\eqref{eq:ImpossCompSlackSoc1} and the sign constraints of the dual variables.

%\subsubsection{Proving the Impossibility Result}
\subsubsection{Establishing Theorem~\ref{thm:imposs}} \label{AppendixImposs}

We now use the above derivations of the first order necessary and sufficient KKT conditions to complete the proof of Theorem~\ref{thm:imposs}.

\begin{proof}
We observe that at the market clearing outcome, it must be that the budgets of all agents must be entirely used up. We now show that this will in general not be true. To see this, consider the KKT condition in equation~\eqref{eq:ImpossCompSlackSoc1} of the social optimization problem. If we sum over $j$, we observe that this equation can be expressed as:

\begin{align}\label{eq:ImpossCompSlackSoc2}
    w_i - \sum_{j = 1}^{m} p_j x_{ij} - \sum_{t = 1}^{l_i} r_{it} = 0 
\end{align}
However, the above constraint implies that the only way $w_i = \sum_{j = 1}^{m} p_j x_{ij}$ is if $\sum_{t = 1}^{l_i} r_{it} = 0$, which implies that $r_{it} = 0$ $\forall i, t$, as each $r_{it} \geq 0$. However, this in general cannot be expected, as at the market clearing outcome the physical constraint~\eqref{eq:ImpossSocOpt2} may be met with equality for some or all agents (as was observed in the examples in Appendix sections~\ref{non-existence-prop1},~\ref{non-uniqueness-prop2} and~\ref{examples-iopopt}). In particular, $r_{it} = 0$ $\forall i, t$ implies that if we loosen any of the physical constraints then the objective function value will remain unchanged. However, if there is a difference between agent's utilities between goods then for the \textit{type} of goods for which the physical constraint is loosened at least one person will be better off as they have higher utilities and so the objective function must increase. As a result, in general the scenario $r_{it} = 0$ $\forall i, t$ is not possible, particularly in cases when physical constraints are met with equality, which may be the case for many markets. Thus, we have that the market clearing KKT conditions of the social and individual optimization problems are not necessarily the same, establishing our claim.

\end{proof}

\subsection{Proof of Theorem~\ref{thm:thm2}} \label{proof-thm2}

We now restate and proceed to prove Theorem~\ref{thm:thm2}.

\thmm*

We now derive the KKT conditions for the budget perturbed social optimization problem \textbf{BP-SOP} and individual optimization problem \textbf{IOP} that are used in the derivation of Theorem~\ref{thm:thm2}.

\subsubsection{Derivation of KKT Conditions for Social Optimization Problem \textbf{BP-SOP}} \label{Appendix4}

To establish the equivalence between individual optimization problem~\eqref{eq:eq1}-\eqref{eq:con3} and the perturbed social optimization problem \textbf{BP-SOP}, we start by deriving the first order necessary and sufficient KKT conditions for the problem~\eqref{eq:SocOpt}-\eqref{eq:SocOpt3}. To do so, we introduce the dual variables $\mathbf{p} \in \mathbb{R}_{\geq 0}^{m}$ for constraint~\eqref{eq:SocOpt1}, $\mathbf{r}_{i} \in \mathbb{R}_{\geq 0}^{l_i}, \forall i$ for each of the constraints in~\eqref{eq:SocOpt2} and $s_{ij} \leq 0, \forall i, j$ for each of the non-negativity constraints~\eqref{eq:SocOpt3}. Next, we derive the Lagrangian of this problem as:

\begin{align}
    \mathcal{L} = \sum_{i = 1}^{n} (w_i + \lambda_i) \log \left(\sum_{j = 1}^{m} u_{ij} x_{ij} \right) - \sum_{j = 1}^{m} p_j \left(\sum_i x_{ij} - \Bar{s}_j \right) - \sum_{i = 1}^{n} \sum_{t = 1}^{l_i} r_{it} \left(\sum_{j = 1}^{m} A_{tj}^{(i)} x_{ij} - 1 \right) - \sum_{i = 1}^{n} \sum_{j = 1}^{m} s_{ij} x_{ij} 
\end{align}
The first order derivative condition is found by taking the derivative of the Lagrangian with respect to $x_{ij}$:

\begin{align}\label{eq:LagDerSoc}
    \frac{w_i + \lambda_i}{\sum_{j = 1}^{m} u_{ij} x_{ij}} u_{ij} - p_j - \sum_{t = 1}^{l_i} r_{it} A_{tj}^{(i)} \leq 0 
\end{align}
Next, the complimentary slackness condition for this problem can be derived by multiplying~\eqref{eq:LagDerSoc} by $x_{ij}$:

\begin{align}\label{eq:CompSlackSoc1}
    \frac{w_i + \lambda_i}{\sum_{j = 1}^{m} u_{ij} x_{ij}} u_{ij} x_{ij} - p_j x_{ij} - \sum_{t = 1}^{l_i} r_{it} A_{tj}^{(i)} x_{ij} = 0 
\end{align}
Thus, the KKT conditions of the social optimization problem \textbf{BP-SOP} are given by equations~\eqref{eq:SocOpt1}-\eqref{eq:SocOpt3},~\eqref{eq:LagDerSoc},~\eqref{eq:CompSlackSoc1} and the sign constraints of the dual variables.

\subsubsection{KKT conditions of Individual Optimization Problem \textbf{IOP}} \label{Appendix4_2}

We now derive the KKT conditions of \textbf{IOP} by formulating a Lagrangian and introducing the dual variable $y_i \geq 0$ for~\eqref{eq:con1}, $\mathbf{\Tilde{r}}_{i} \in \mathbb{R}_{\geq 0}^{l_i}$ for each of the constraints in~\eqref{eq:con2} and $\Tilde{s}_{ij} \leq 0, \forall j$ for each of the non-negativity constraints~\eqref{eq:con3}. Thus, our Lagrangian is:

\begin{align}
    \mathcal{L}(\mathbf{x}_i, y_i, \mathbf{r}_i, \mathbf{s}) = \sum_{j = 1}^{m} u_{ij} x_{ij} - y_i \left (\sum_{j = 1}^{m} p_j x_{ij} - w_i \right) - \sum_{t = 1}^{l_i} \Tilde{r}_{it} \left (\sum_{j = 1}^{m} A_{tj}^{(i)} x_{ij} - 1 \right) - \sum_{j = 1}^{m} \Tilde{s}_{ij} x_{ij} 
\end{align}
The first order constraint for this problem is found by taking the derivative of our Lagrangian with respect to $x_{ij}$ and noting that $\Tilde{s}_{ij} \leq 0$:

\begin{align}\label{eq:LagDer}
    u_{ij} - y_i p_j - \sum_{t = 1}^{l_i} \Tilde{r}_{it} A_{tj}^{(i)} \leq 0 
\end{align}
Next, we derive the complimentary slackness condition for this problem by multiplying~\eqref{eq:LagDer} by $x_{ij}$:

\begin{align}\label{eq:CompSlack1}
    u_{ij}x_{ij} - y_i p_j x_{ij} - \sum_{t = 1}^{l_i} \Tilde{r}_{it} A_{tj}^{(i)} x_{ij} = 0 
\end{align}
Thus, the final KKT conditions for our problem are given by equations~\eqref{eq:con1},~\eqref{eq:con2},~\eqref{eq:con3},~\eqref{eq:LagDer} and~\eqref{eq:CompSlack1}, and the sign constraints on the dual variables. Furthermore, at equilibrium conditions, it must hold that the goods must be sold to capacity, which is given by: $\sum_{i = 1}^{n} x_{ij} = \Bar{s}_j, \forall j$.

We use the derived KKT conditions of the individual and social optimization problems to prove our theorem in the analysis that follows.

\begin{proof}
($\Rightarrow$) To prove the forward direction of our claim, we need to show that given a market equilibrium $(\mathbf{x}_i, \mathbf{p})$, $\forall i$ of the \textbf{IOP}, we can construct $\lambda_i$, such that $\lambda_i = \sum_{t = 1}^{l_i} r_{it}$, $\forall i$.

We proceed by considering two cases, (i) when $y_i>0$ and (ii) when $y_i = 0$, where $y_i$ is the dual variable of the budget constraint in the \textbf{IOP}. We first note at equilibrium conditions $\sum_{i = 1}^{n} x_{ij} = \Bar{s}_j, \forall j \in [m]$ and that the constraints of the individual optimization problem \textbf{IOP} already implies the other constraints of \textbf{BP-SOP}, i.e., the constraints~\eqref{eq:SocOpt2} and~\eqref{eq:SocOpt3}. Thus, all we need to do is check the lagrangian derivative condition~\eqref{eq:LagDerSoc}, complementary slackness condition~\eqref{eq:CompSlackSoc1} and dual multiplier sign constraints.

\textbf{Case 1} ($y_i>0$): When we sum over all goods $j$, we have from the complimentary slackness condition~\eqref{eq:CompSlack1} that:

\begin{align} \label{eq:comp_slack_iopsop}
    \sum_{j = 1}^{m} u_{ij}x_{ij} = y_i \sum_{j = 1}^{m} p_j x_{ij} + \sum_{j = 1}^{m} \sum_{t = 1}^{l_i} \Tilde{r}_{it} A_{tj}^{(i)} x_{ij} = y_i w_i + \sum_{t = 1}^{l_i} \Tilde{r}_{it}
\end{align}
The second equality follows from the complimentary slackness condition for the budget and physical constraints.

Next, using the Lagrangian derivative condition~\eqref{eq:LagDer}, and taking $r_{it} = \frac{\Tilde{r}_{it}}{y_i}$ we have that:

\begin{align} \label{eq:new-ineq-iopsop}
    \frac{1}{y_i} \leq \frac{p_j + \sum_{t = 1}^{l_i} r_{it} A_{tj}^{(i)}} {u_{ij}}, \forall j
\end{align}

Now setting $\lambda_i = \sum_{t = 1}^{l_i} r_{it} = \frac{\sum_{t = 1}^{l_i} \Tilde{r}_{it}}{y_i}$, we obtain that:

\begin{align}
    \frac{w_i + \lambda_i}{\sum_{j = 1}^{m} u_{ij}x_{ij}} = \frac{1}{y_i} \cdot \frac{ \left(y_i w_i + \sum_{t = 1}^{l_i} \Tilde{r}_{it} \right)}{\sum_{j = 1}^{m} u_{ij}x_{ij}} 
\end{align}
Now observing Equation~\eqref{eq:comp_slack_iopsop} and using the inequality~\eqref{eq:new-ineq-iopsop}, we can rewrite the above expression as:

\begin{align} \label{eq:reprove-comp-slack}
    \frac{w_i + \lambda_i}{\sum_{j = 1}^{m} u_{ij}x_{ij}} = \frac{1}{y_i} \leq \frac{p_j + \sum_{t = 1}^{l_i} r_{it} A_{tj}^{(i)}} {u_{ij}}, \forall j
\end{align}
This is exactly the first order derivative condition of the \textbf{BP-SOP} as in Equation~\eqref{eq:LagDerSoc}. Multiplying Equation~\eqref{eq:reprove-comp-slack} by $x_{ij}$, we obtain by complimentary slackness the condition in equation~\eqref{eq:CompSlackSoc1}.

Finally, since we have that all the dual variables in \textbf{IOP} are scaled by a positive constant, as $y_i>0$ the corresponding signs of the dual variables in $\textbf{BP-SOP}$ remain intact. Thus, we have shown that when $y_i>0$ and when we set $\lambda_i = \sum_{t = 1}^{l_i} r_{it}$, $\forall i$ then the market equilibrium KKT conditions of the \textbf{IOP} are the same as that of $\textbf{BP-SOP}$.

\textbf{Case 2} ($y_i = 0$): We start by introducing some necessary notation. From Theorem~\ref{thm:iopoptimal}, we define $\alpha_i^*$ to be the slope of buyer $i$'s last bought \textit{virtual product}, i.e., with the maximal slope, and let $\lambda_i = \alpha_i^*(\sum_{j = 1}^{m} u_{ij} x_{ij}) - w_i$.

\begin{comment}
We have from~\eqref{eq:LagDer} that:

\begin{align}
    u_{ij} \leq \sum_{t = 1}^{l_i} \Tilde{r}_{it} A_{tj}^{(i)}, \forall j
\end{align}
In particular, this implies that:
\begin{align}
    \max_{j} u_{ij} \leq \sum_{t = 1}^{l_i} \Tilde{r}_{it} A_{tj}^{(i)}, \forall j
\end{align}
Now, as each good only belongs to at most one resource \textit{type}, it follows that $A_{tj}^{(i)} = 1$ for a unique \textit{type} $t$ for each good $j$. Thus, it follows that: $\forall t$, $\Tilde{r}_{it} \geq \max_{j \in t} u_{ij}>0$.
\end{comment}

%We have $\forall t$ that $\exists j_1 \in t$, such that $u_{ij_1} = \max_{j \in t} u_{ij}$. We now let $\theta_t = \frac{p_{j_1} - p_j}{u_{ij_1} - u_{ij}}$ for $j$ such that $u_{ij_1} > u_{ij}$. Next, we let: $\alpha_i^* = \max_t \theta_t$ and $\lambda_i = \alpha_i^*(\sum_{j = 1}^{m} u_{ij} x_{ij}) - w_i$.

Finally, for good $j$ belonging to \textit{type} $t$, we define $\sum_{t' = 1}^{l_i} r_{it'} A_{t'j}^{(i)} = r_{it} = \alpha_i^* u_{ij} - p_j$ for $x_{ij}>0$, which follows as $A_{tj}^{(i)} = 1$ for a unique \textit{type} $t$ for each good $j$. Then, we obtain that:

\begin{align}
    \frac{w_i + \lambda_i}{\sum_{j = 1}^{m} u_{ij}x_{ij}} = \alpha_i^* = \frac{p_j + \sum_{t' = 1}^{l_i} r_{it'} A_{t'j}^{(i)}}{u_{ij}}, \forall j \text{ s.t. } x_{ij}>0
\end{align}

On the other hand if $x_{ij} = 0$, we have that $\frac{p_{j_1} - p_j}{u_{ij_1} - u_{ij}} = \theta_t \leq \alpha_i^*$, which implies that: $\alpha_i^* u_{ij} - p_j \leq \alpha_i^* u_{ij_1} - p_{j_1} = r_{it} = \sum_{t' = 1}^{l_i} r_{it'} A_{t'j}^{(i)}$, where $j, j_1 \in t$.

Using this relationship, we observe that:

\begin{align}
    \frac{w_i + \lambda_i}{\sum_{j = 1}^{m} u_{ij}x_{ij}} = \alpha_i^* \leq \frac{p_j + \sum_{t' = 1}^{l_i} r_{it'} A_{t'j}^{(i)}}{u_{ij}}, \forall j \text{ s.t. } x_{ij}=0
\end{align}
This is exactly the first order derivative condition of the \textbf{BP-SOP} as in Equation~\eqref{eq:LagDerSoc}. Multiplying Equation~\eqref{eq:reprove-comp-slack} by $x_{ij}$, we obtain by complimentary slackness the condition in equation~\eqref{eq:CompSlackSoc1}.

We now note that to show $\lambda_i = \sum_{t = 1}^{l_i} r_{it}$, we multiply the equation $\sum_{t' = 1}^{l_i} r_{it'} A_{t'j}^{(i)} = r_{it} = \alpha_i^* u_{ij} - p_j$ with $x_{ij}$ and sum over all goods $j$ to get:

\begin{align}
    \sum_{j = 1}^{m} \sum_{t' = 1}^{l_i} r_{it'} A_{t'j}^{(i)} x_{ij} = \sum_{j = 1}^{m} \alpha_i^* u_{ij} x_{ij} - \sum_{j = 1}^{m} p_j x_{ij}
\end{align}
Then by complimentary slackness we get that:

\begin{align}
    \sum_{t' = 1}^{l_i} r_{it'} = \sum_{j = 1}^{m} \alpha_i^* u_{ij} x_{ij} - \sum_{j = 1}^{m} p_j x_{ij} = \alpha_i^* \sum_{j = 1}^{m} u_{ij} x_{ij} - w_i = \lambda_i
\end{align}

Next, we observe that since $\alpha_i^*$ is the maximal $\theta$, we must have that $\alpha_i^* \geq \frac{p_j}{u_{ij}}$ which implies that $\alpha_i^* u_{ij} - p_j \geq 0$ giving us that $r_{it}\geq 0$.

Finally, both $r_{it} \geq 0$, $\forall i, t$ and $p_j \geq 0$, $\forall j$ and so the signs of the dual variables in $\textbf{BP-SOP}$ remain intact. Thus, we have shown that when $y_i=0$ and when we set $\lambda_i = \sum_{t = 1}^{l_i} r_{it}$, $\forall i$ then the market equilibrium KKT conditions of the \textbf{IOP} are the same as that of $\textbf{BP-SOP}$.

($\Leftarrow$) To establish the converse of the theorem, we first note that the constraints of the social optimization problem already implies that $\forall j$, good $j$ is sold to capacity if $p_j>0$. If on the other hand $p_{j_1} = 0$ for some good $j_1 \in t$, i.e., good $j_1$ belongs to good type $t$ and $\sum_i x_{ij_1} < \Bar{s}_{j_1}$ then it must follow that $\forall i$, $\sum_{j \in t} x_{ij} = 1$, as agents can always buy more units of good type $j_1$ to increase their utility. Without loss of generality, suppose that all agents are interested in consuming goods from resource type $t$, so that $t \in T_i$, $\forall i$. Thus, we have that: $n = \sum_{i = 1}^{n} \sum_{j \in t} x_{ij} = \sum_{j \in t} \sum_{i = 1}^{n} x_{ij} < \sum_{j \in t} \Bar{s}_{j} $, as $\sum_i x_{ij_1} < \Bar{s}_{j_1}$ by assumption. However, for \textbf{BP-SOP} to have a feasible solution it must be that $\sum_{j \in t} \Bar{s}_{j} \leq n$. Combining this with the above inequality we get that $n<n$, a contradiction. Thus, we must have that $\sum_i x_{ij_1} = \Bar{s}_{j_1}$. As a result, we have established that $\forall j$, good $j$ is sold to capacity.

The constraints of the social optimization problem also imply the \textbf{IOP} constraints~\eqref{eq:con2} and~\eqref{eq:con3}. Next, we use the notation and equations in appendix section~\ref{Appendix4} and the corresponding KKT conditions of \textbf{BP-SOP} and \textbf{IOP}.
\newline
If we divide~\eqref{eq:LagDerSoc} and~\eqref{eq:CompSlackSoc1} by $\frac{w_i + \lambda_i}{\sum_{j = 1}^{m} u_{ij} x_{ij}}$, then these equations map respectively to the Lagrangian derivative equation~\eqref{eq:LagDer} and complimentary slackness equation~\eqref{eq:CompSlack1} of the individual optimization problem. We show this by denoting $\mathbf{p} \in \mathbb{R}_{\geq 0}^{m}$ as the dual variable for the capacity constraint, i.e., $p_j$ is the dual variable corresponding to the capacity constraint for good $j$, and then we have that~\eqref{eq:LagDerSoc} becomes:

\begin{align}\label{eq:LagDerSoc2}
    u_{ij} - \frac{\sum_{j = 1}^{m} u_{ij} x_{ij}}{w_i + \lambda_i} p_j - \sum_{t = 1}^{l_1} \frac{\sum_{j = 1}^{m} u_{ij} x_{ij}}{w_i + \lambda_i}r_{it} A_{tj}^{(i)} \leq 0 
\end{align}
The above equation is equivalent to ~\eqref{eq:LagDer}, where $y_i = \frac{\sum_{j = 1}^{m} u_{ij} x_{ij}}{w_i + \lambda_i} \geq 0$, $\Tilde{r}_{it} = \frac{\sum_{j = 1}^{m} u_{ij} x_{ij}}{w_i + \lambda_i}r_{it} \geq 0$. The same analysis holds for the complementary slackness condition equivalence in the two problems. 

Now, all it remains for us to satisfy is the \textbf{IOP} constraint~\eqref{eq:con1}. To do this, we use the complementary slackness condition in equation~\eqref{eq:CompSlackSoc1} and sum over $j$. Then realizing that $r_{it} \sum_{j = 1}^{m} A_{tj}^{(i)} x_{ij} = r_{it}$ by complimentary slackness we get:

\begin{align}\label{eq:CompSlackSoc2}
    w_i + \lambda_i - \sum_{j = 1}^{m} p_j x_{ij} - \sum_{t = 1}^{l_i} r_{it} = 0 
\end{align}
Thus, if we set $\lambda_i = \sum_{t = 1}^{l_i} r_{it}$ as in the statement of the theorem, we obtain that the budget constraint condition~\eqref{eq:con1} is satisfied with equality. This implies that we have a market clearing outcome, as the budgets are completely used and the goods are sold to capacity, while the KKT conditions of the two problems are equivalent at market clearing conditions. This establishes our claim.

%To establish the converse, i.e., that any equilibrium prices in the market for \textbf{BP-SOP} will result in a \textit{fixed point} for our problem, we observe that any set of equilibrium prices must be such that the budgets are completely used and the goods are sold to capacity. As budgets are completely used, we have that $w_i = \sum_{j = 1}^{m} p_j x_{ij}$. But now any optimal solution to \textbf{BP-SOP} must satisfy~\eqref{eq:CompSlackSoc2} and so we must have that any optimal solution to \textbf{BP-SOP} that sets market equilibrium prices must satisfy: $\lambda_i = \sum_{t = 1}^{l_i} r_{it}$. Hence, we have established the converse of our theorem as well.
\end{proof}

\clearpage
\begin{comment}

\end{comment}

%\printbibliography
%\noindent{\bf References:} \printbibliography[heading=none]

\end{document}